\documentclass[runningheads]{llncs}

\usepackage{eccv}

\AtBeginDocument{\nolinenumbers}

\usepackage{eccvabbrv}

\usepackage{graphicx}
\usepackage{booktabs}
\usepackage{multirow}
\usepackage{subcaption}
\usepackage{mathtools}

\usepackage[accsupp]{axessibility}  

\usepackage{hyperref}

\usepackage{orcidlink}

\begin{document}

\title{mAVE: A Watermark for Joint Audio-Visual Generation Models} 


\author{Luyang Si \and
Leyi Pan \and
Lijie Wen}

\authorrunning{L. Si et al.}

\institute{School of Software, Tsinghua University}

\maketitle

\begin{abstract}
As Joint Audio-Visual Generation Models see widespread commercial deployment, embedding watermarks has become essential for protecting vendor copyright and ensuring content provenance. However, existing techniques suffer from an architectural mismatch by treating modalities as decoupled entities, exposing a critical Binding Vulnerability. Adversaries exploit this via Swap Attacks by replacing authentic audio with malicious deepfakes while retaining the watermarked video. Because current detectors rely on independent verification ($Video_{wm}\vee Audio_{wm}$), they incorrectly authenticate the manipulated content, falsely attributing harmful media to the original vendor and severely damaging their reputation. To address this, we propose mAVE (Manifold Audio-Visual Entanglement), the first watermarking framework natively designed for joint architectures. mAVE cryptographically binds audio and video latents at initialization without fine-tuning, defining a Legitimate Entanglement Manifold via Inverse Transform Sampling. Experiments on state-of-the-art models (LTX-2, MOVA) demonstrate that mAVE guarantees performance-losslessness and provides an exponential security bound against Swap Attacks. Achieving near-perfect binding integrity ($>99\%$), mAVE offers a robust cryptographic defense for vendor copyright.

\keywords{Joint Audio-Visual Generation \and GenAI Watermark \and Audio-Visual Binding}
\end{abstract}

\section{Introduction}

\begin{figure}[t]
\centering
\includegraphics[width=0.9\linewidth]{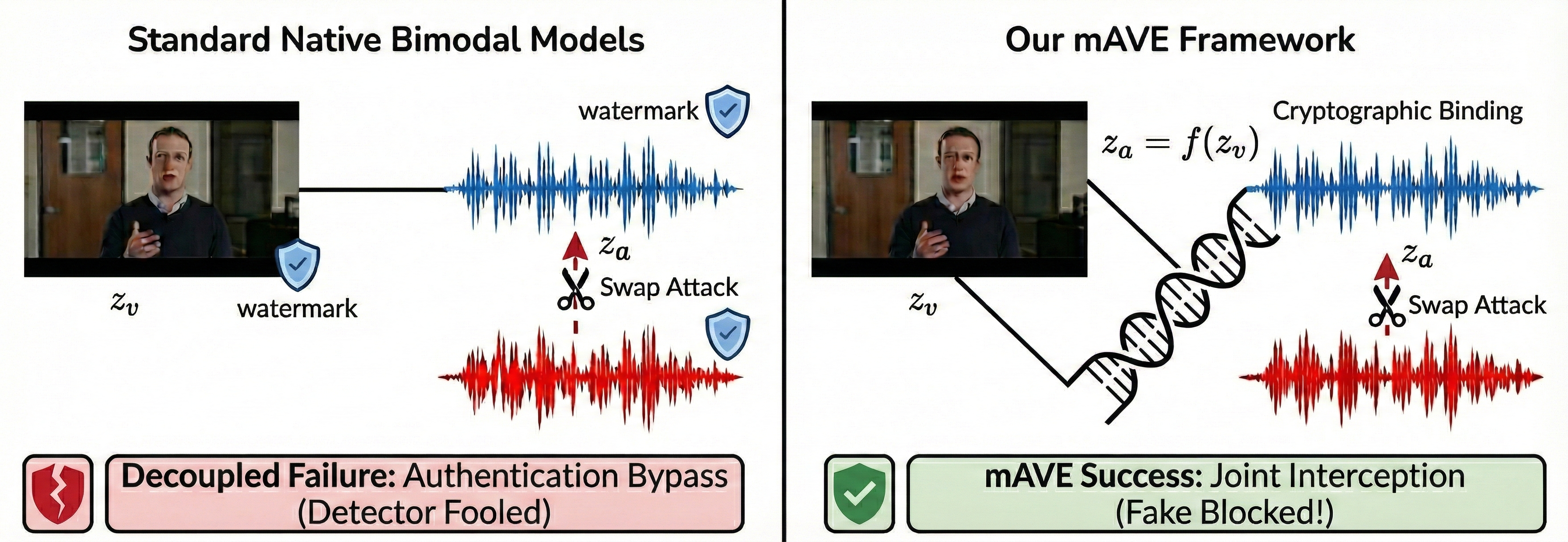}
\caption{\textbf{Secure mAVE Entanglement.} \textbf{(Left)} Standard joint audio-visual generation models rely on decoupled watermarks, leading to an Authentication Bypass. A Swap Attack replaces authentic audio ($z_a$) with a malicious audio track. Since both may possess valid independent watermarks, a decoupled detector is Fooled, and incorrectly flags the manipulated content as authentic. \textbf{(Right)} Our mAVE framework enforces a Cryptographic Binding at initialization. By securely entangling the audio latent to the video latent ($z_a=f(z_v)$), we construct a formal Entanglement Manifold. Any swapped audio breaks this functional dependency, enabling our joint detector to confidently Intercept the attack.}
\label{fig:teaser}
\end{figure}

Joint Audio-Visual Generation Models have rapidly emerged as the standard for synthesizing highly realistic, synchronized multimedia content. State-of-the-art proprietary models~\cite{openai2025sora2, deepmind2025veo3, kling2026kling3, bytedance2026seedance2} and open-source frameworks~\cite{hacohen2026ltx2, openmossteam2026mova} now treat video and audio as a joint distribution, utilizing unified denoising dynamics within a shared latent space. As these powerful models see widespread commercial deployment, integrating invisible in-processing watermarks has become an essential requirement for vendors to protect intellectual property and ensure content provenance.

Despite this progress, current protection mechanisms suffer from a fundamental architectural mismatch. Existing solutions remain trapped in a decoupled paradigm, embedding signals independently into video~\cite{fernandez2024videoseal,luo2023dvmark,hu2025videoshield} or audio waveforms~\cite{sanroman2024proactive}. As illustrated in \cref{fig:teaser} (Left), this independent treatment exposes a critical Binding Vulnerability. An adversary can easily perform a Swap Attack by retaining a vendor's watermarked video while replacing the authentic audio with a harmful deepfake voiceover. Because current forensics independently verify watermarks via a logical disjunction ($Video_{wm}\vee Audio_{wm}$), detectors incorrectly authenticate the manipulated media, falsely attributing malicious content to the original vendor.

Crucially, elevating the detection criteria to a logical conjunction ($Video_{wm}\wedge Audio_{wm}$) fails to mitigate this threat. Adversaries can bypass stricter verification through cross-session splicing. By repeatedly querying the model, an attacker can harvest a benign visual output from one session and a targeted, harmful audio output from another. Splicing these components yields a realistic asset where both modalities carry the vendor's legitimate watermark, again tricking the detector. Furthermore, post-hoc synchronization verifiers~\cite{raina2023syncnet} are often brittle and semantically constrained, failing to reliably intercept such attacks in open-domain scenarios.

To address this, we argue that the security of joint generative models must be intrinsic to their unified generation process. As shown in \cref{fig:teaser} (Right), we propose mAVE (Manifold Audio-Visual Entanglement). Leveraging the mathematical invertibility of ODE-based samplers~\cite{lu2022dpm,song2021ddim, karras2022elucidating}, mAVE cryptographically binds the initial video noise $\mathbf{z}_v$ and audio noise $\mathbf{z}_a$ at the starting point of the generative trajectory. Instead of treating them as independent Gaussian variables, we construct a Legitimate Entanglement Manifold via Inverse Transform Sampling. This functionally binds the audio noise to a cryptographic hash of the video noise, mathematically guaranteeing that both modalities originate from the exact same session and decisively blocking cross-session splicing.

Our contributions are summarized as follows:

\begin{itemize}
    \item \textbf{Method:} We propose mAVE, the first watermarking strategy natively designed for Joint Audio-Visual Generation Models. By reformulating the initialization step as sampling from an entangled manifold, we achieve strong audio-visual binding without model fine-tuning or post-hoc artifacts.
    \item \textbf{Theory:} We provide rigorous theoretical guarantees for our framework: (a) \textit{Performance-Losslessness}, proving under the chosen watermark tests framework~\cite{hopper2002provably} that our entangled initialization is computationally indistinguishable from standard Gaussian sampling; and (b) \textit{Security Bound}, where we derive an upper bound for the Deepfake evasion probability based on Hoeffding's inequality, demonstrating exponential decay of false positives.
    \item \textbf{Performance:} Extensive experiments on open-source joint models (LTX-2~\cite{hacohen2026ltx2} and MOVA~\cite{openmossteam2026mova}) demonstrate that mAVE significantly outperforms naive combinations of unimodal watermarks. We achieve superior detection accuracy and robust separation bounds against manipulation while maintaining original generation quality.
\end{itemize}

\section{Related Work}

\subsection{From Single-Modality to Joint Audio-Visual Generation}
\textbf{Diffusion Models and Flow Matching.} Generative AI has been revolutionized by Denoising Diffusion Probabilistic Models~\cite{ho2020denoising} and their latent variants~\cite{rombach2022high, saharia2022photorealistic, huang2023make, blattmann2023stable}. Recent advancements optimize this paradigm using flow matching and Rectified Flow~\cite{liu2022flow, liu2023i2sb, song2023consistency}, enabling highly efficient few-step sampling through straight probability trajectories.

\textbf{Joint Audio-Visual Generation.}
Historically, audio-visual synthesis relied on cascaded pipelines—sequentially generating video-from-audio or audio-from-video. However, these decoupled approaches fail to capture the joint probability distribution $p(x_v, x_a)$, often resulting in synchronization artifacts and semantic misalignment.
To address these limitations, the field is shifting towards Joint Audio-Visual Generation. State-of-the-art open-source frameworks such as LTX-2~\cite{hacohen2026ltx2} and MOVA~\cite{openmossteam2026mova} employ unified architectures like Asymmetric Bi-Transformers that process video and audio latents simultaneously. In these models, a unified denoising loop facilitates bidirectional information exchange via cross-attention, ensuring strict temporal alignment. This shift from independent to joint generation provides the structural basis for our proposed entangled watermarking strategy.

\textbf{Rectified Flow and Invertibility.}
These native bimodal models build upon Rectified Flow~\cite{liu2022flow}, which defines a straight interpolation $\mathbf{z}_t = t\mathbf{x}_1 + (1-t)\mathbf{z}_0$ between data $\mathbf{x}_1$ and noise $\mathbf{z}_0 \sim \mathcal{N}(\mathbf{0}, \mathbf{I})$. Unlike Denoising Diffusion Implicit Models(DDIMs)~\cite{song2021ddim} whose curved trajectories accumulate inversion errors, Rectified Flow's straight transport paths yield practically invertible mappings $\mathbf{z}_0 = \text{ODE-Solve}(\mathbf{x}_1, 1{\to}0, v_\theta)$. This near-perfect invertibility is the key enabler for reliable watermark embedding and recovery in the noise space.

\subsection{Video Watermarking}
As generative capabilities expand, distinguishing synthetic content from authentic footage has become critical. Existing video watermarking schemes broadly fall into two categories: post-processing and in-processing.

\textbf{Post-processing Schemes.}
These methods operate on the decoded video pixels, embedding imperceptible signals via additive perturbations or multiplicative masks~\cite{luo2023dvmark, fernandez2024videoseal}. While effective for legacy copyright protection, they face the \textit{invisibility-robustness trade-off}: increasing robustness against re-encoding often introduces visible artifacts. Furthermore, applying watermarks after generation does not fundamentally verify the provenance of the generative process itself.

\textbf{In-processing Schemes.}
To integrate protection intrinsically, recent works leverage the generative process itself. The paradigm was established by Gaussian Shading~\cite{yang2024gaussian}, which proposed a provable watermarking scheme for image diffusion by coupling the generated output with a fixed initial noise state.
Expanding into the temporal domain, methods like VideoShield~\cite{hu2025videoshield} and the recent VideoMark~\cite{hu2025videomark} generalize these principles to video diffusion, embedding ownership information directly into spatiotemporal latents. The rapid evolution of these algorithms has also led to the development of integrated evaluation libraries such as MarkDiffusion~\cite{pan2025markdiffusion}, which benchmark the performance of various generative watermarking strategies.
However, despite these advancements, current in-processing schemes treat the video latent as an independent variable, disregarding the potential for cross-modal cryptographic binding in bimodal contexts.
\subsection{Audio Watermarking and The Binding Gap}
Parallel to the visual domain, audio watermarking is essential for safeguarding the auditory component of bimodal content.

\textbf{Deep Generative Audio Watermarking.}
While traditional methods relied on fragile time-domain heuristics~\cite{chen2023wavmark}, the advent of deep learning has introduced robust encoder-decoder architectures. A prominent example is AudioSeal~\cite{sanroman2024proactive}, which employs a localized detection mechanism capable of identifying AI-generated segments at the sample level of $1/16,000$s. This localization capability makes it highly effective against splicing and deepfake manipulation in long-form audio.

\textbf{The Binding Gap.}
Despite the maturity of unimodal watermarking, a critical gap remains: \textit{Binding Security}. Current approaches treat audio and video as separate entities, verifying $Video_{wm} \lor Audio_{wm}$ independently. This leaves joint audio-visual generation models vulnerable to Swap Attacks, where an adversary pairs a watermarked video with a mismatched or malicious audio track. To date, no method exploits the joint initialization space of these models to create a mathematically provable link between the two modalities, a void this work aims to fill with mAVE.

\section{Threat Model and Problem Formulation}
\label{sec:threat_model}

Drawing from standard deepfake adversarial scenarios~\cite{tolosana2020deepfakes, mirsky2021creation}, we consider an adversary with full access to the generation model and detection API, but no knowledge of the private key $K_{priv}$. We identify two attack vectors:
\textbf{(1) Removal Attack:} erasing the watermark from a single modality via signal processing distortions (compression, blurring, noise addition).
\textbf{(2) Swap Attack:} creating a mismatched pair $\tilde{\mathbf{x}} = (\mathbf{x}_v^{(A)}, \mathbf{x}_a^{(B)})$ from two different watermarked sessions to exploit the independent verification flaw ($Video_{wm} \lor Audio_{wm} \implies True$). Even sophisticated adversaries using automated synchronization tools~\cite{raina2023syncnet} cannot provide the cryptographic certainty needed to evade our binding.

\textbf{Hypothesis Testing Formulation.}
We frame detection as a binary hypothesis test on the inverted noise pair $(\mathbf{z}_v, \mathbf{z}_a)$. Let $\mathcal{B}(\cdot)$ be a cryptographic binding function defining an \textbf{Authentic Manifold} $\mathcal{M}$ in the joint noise space:
\begin{itemize}
    \item $H_0$ (Authentic): $\mathbf{z}_a = \Pi\big(\mathcal{B}(\mathbf{z}_v, K_{priv})\big) + \boldsymbol{\eta}$, \ $\boldsymbol{\eta} \sim \mathcal{N}(\mathbf{0}, \sigma^2\mathbf{I})$, where $\boldsymbol{\eta}$ is negligible noise ensuring marginal Gaussianity (see \cref{sec:method}).
    \item $H_1$ (Swapped): $\mathbf{z}_a \perp \mathcal{B}(\mathbf{z}_v, K_{priv})$, \ie, audio and video are from independent sessions.
\end{itemize}
A valid scheme must ensure $P(\text{accept} | H_1) \le \text{negl}(\rho)$, while maintaining high recall for $H_0$.

\begin{figure}[t]
    \centering
    \includegraphics[width=0.9\textwidth]{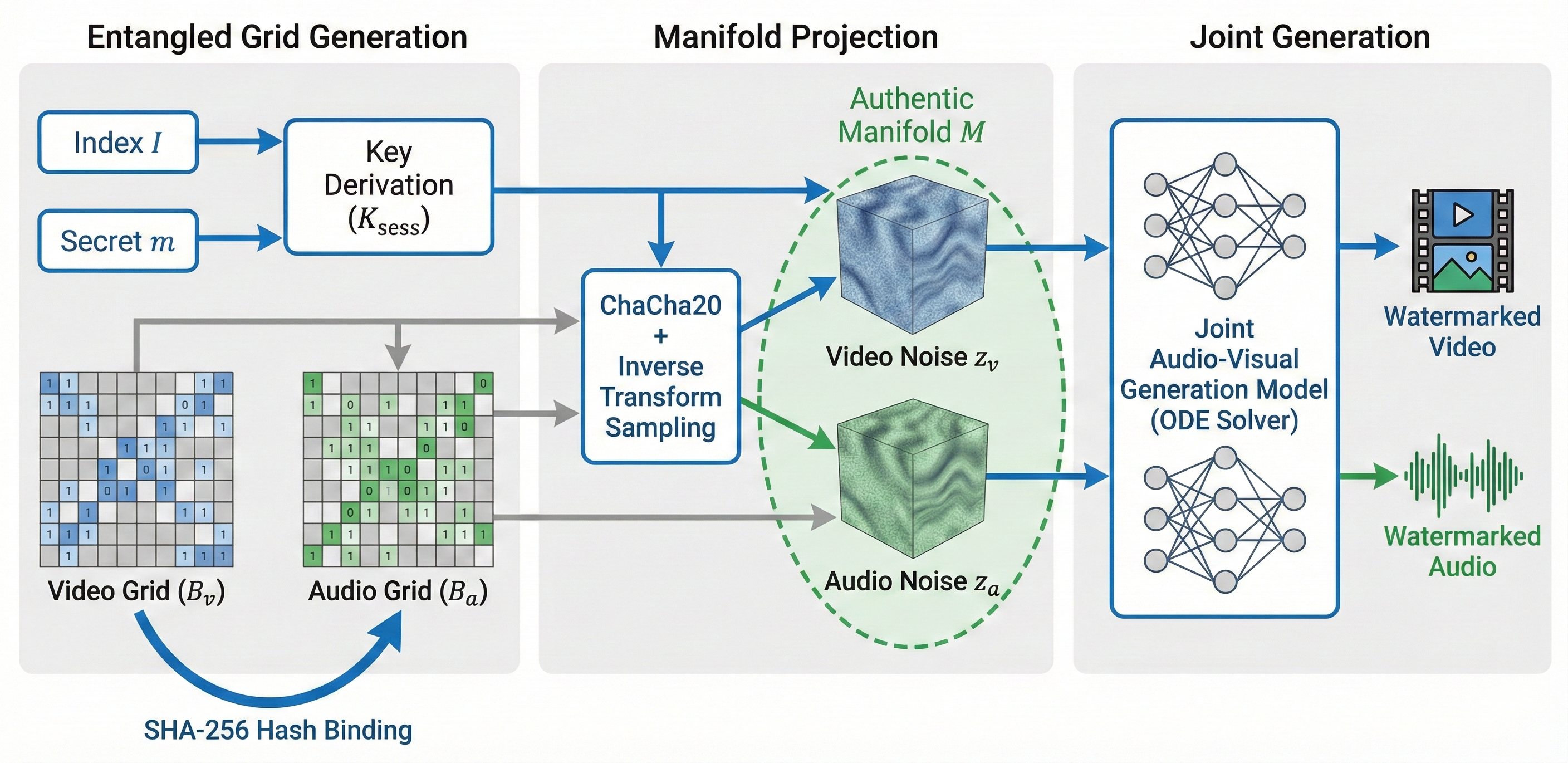}
    \caption{\textbf{Overview of mAVE.} Our training-free method restricts the joint generation process to a cryptographically entangled manifold. 
    \textbf{Left:} We first construct discrete grids where the audio grid ($B_a$) is cryptographically bound to the video grid ($B_v$) via a hash digest. 
    \textbf{Middle:} These grids are randomized and projected into continuous latent space to form the initial noise latents ($\mathbf{z}_v, \mathbf{z}_a$), mathematically defining the Authentic Manifold $\mathcal{M}$. 
    \textbf{Right:} A joint audio-visual generation model denoises these entangled latents in a unified forward pass, intrinsically preventing decoupling and Swap Attacks.}
    \label{fig:method}
\end{figure}

\section{Method}
\label{sec:method}

We propose \textbf{mAVE}, a training-free framework that restricts the generation process to a cryptographically entangled manifold. As illustrated in \cref{fig:method}, our method intervenes solely at the initialization stage, embedding the cryptographic binding directly into the initial noise prior to the joint generation process. We define this initial noise via a rigorous two-step pipeline: constructing the discrete entangled geometry (\cref{sec:manifold_construction}) and projecting it into the continuous latent space via inverse sampling (\cref{sec:embedding}). Finally, a native joint bimodal model processes these entangled latents, rendering the output fundamentally resistant to decoupled manipulations.

\subsection{Constructing the Authentic Manifold}
\label{sec:manifold_construction}

Formally, the Authentic Manifold $\mathcal{M} \subset \mathcal{Z} = \mathcal{Z}_v \times \mathcal{Z}_a$ is defined by sparse cryptographic constraints rather than dimensionality reduction. $\mathcal{M}$ represents the locus of latent pairs satisfying the binding condition:
\begin{equation}
    \mathcal{M} = \{ (\mathbf{z}_v, \mathbf{z}_a) \in \mathcal{Z} \mid \text{Verify}(\mathbf{z}_a, \text{SHA\text{-}256}(\mathbf{z}_v, K_{sess})) = \text{True} \}.
\end{equation}
Despite the sparsity ($N \ll \dim(\mathcal{Z})$), the hash function's avalanche effect ensures that finding valid pairs without the session key is computationally infeasible.

\textbf{Session-Aware Key Derivation.} The embedded payload comprises two segments: a \textit{plaintext index} $I \in \{0,1\}^{L_I}$ and a high-entropy \textit{secret payload} $m \in \{0,1\}^{L_m}$. The index $I$ is a non-confidential session identifier embedded without encryption, enabling keyless extraction at detection time. The secret $m$ serves as the system's private key ($K_{priv} \coloneqq m$), stored on a trusted server-side database keyed by $I$ and never embedded. The session key is derived as:
\begin{equation}
\label{eq: key}
    K_{sess} = \text{SHA-256}(\text{Prefix}(\text{SHA-256}(m)) \parallel \text{SHA-256}(E_P)).
\end{equation}
Modality-specific sub-keys follow via $\text{HMAC}(K_{sess}, \text{modality})$, and a shared seed provides correlated base entropy. Because $m$ never leaves the server, deriving $K_{sess}$ is computationally infeasible.

\textbf{Entangled Bit Grid Generation}
We construct watermark grids $B_v$ and $B_a$ incorporating \textit{Time-Templates} and \textit{Hash-Binding} (see \cref{fig:method}, Left).

\textbf{1. Video Grid ($B_v$).}
The grid has three disjoint regions: (i) a \textbf{Time-Template} $\mathbf{t}$ (channel $c_0$) for synchronization, (ii) the plaintext index $I$ via unencrypted repetition coding, and (iii) encrypted HMAC-derived base bits:
\begin{equation}
    B_v = \mathcal{I}_{inject}\!\left( \text{HMAC}(K_{sess}, \text{``video''}) \oplus S_{shared},\  \mathbf{t},\  I \right).
\end{equation}

\textbf{2. Audio Grid with Binding ($B_a$).}
To enforce dependency, we embed a digest of the video bits, $h_v = \text{SHA-256}(B_v)$, into the audio grid at indices $\mathcal{I}_{bind}$. This couples the modalities:
\begin{equation}
    B_a[i] = 
    \begin{cases} 
    \mathbf{t}[i] & \text{if } i \in \mathcal{I}_{time} \quad \text{(Sync)}, \\
    h_v[\phi(i)] & \text{if } i \in \mathcal{I}_{bind} \quad \text{(Binding)}, \\
    \text{Base}_a[i] & \text{otherwise.}
    \end{cases}
    \label{eq:audio_binding}
\end{equation}
Here, $\mathcal{I}_{time}$ anchors synchronization while $\mathcal{I}_{bind}$ enforces the manifold constraint. Geometrically, rather than a smooth lower-dimensional surface, $\mathcal{M}$ constitutes a full-dimensional, disconnected sub-manifold of the joint latent space. It is composed of a union of disjoint hyper-orthants, precisely selected by the deterministic cryptographic hashing function.

\subsection{Embed: Inverse Transform Sampling on Manifolds}
\label{sec:embedding}

A critical challenge in latent watermarking is mapping a low-dimensional payload to a high-dimensional latent tensor $\mathbf{z} \in \mathbb{R}^{C \times T \times H \times W}$ without inducing visible artifacts or statistical anomalies. To achieve this, we employ a three-stage process: \textit{Diffusion}, \textit{Randomization}, and \textit{Inverse Sampling} (\cref{fig:method}, Middle).

\textbf{Watermark Diffusion and Randomization}
Let $B \in \{0,1\}^L$ be the entangled bit grid constructed in \cref{sec:manifold_construction}. We first apply \textbf{Watermark Diffusion}. The discrete watermark bits are replicated block-wise to form a diffused tensor $B^{diff}$ matching the target latent dimensions.

Directly embedding this repetitive structure would result in visible periodic artifacts. Thus, we apply \textbf{Watermark Randomization}. Using the session key $K_{sess}$, we generate a pseudo-random keystream $\mathcal{K}$ via the ChaCha20~\cite{bernstein2008chacha} stream cipher. The final randomized binary map $M_{rand}$ is:
\begin{equation}
    M_{rand} = B^{diff} \oplus \mathcal{K}.
\end{equation}
Since ChaCha20 is a cryptographically secure PRNG, $M_{rand}$ is computationally indistinguishable from a discrete uniform distribution, ensuring $P(M_{rand}[i]=1) \approx 0.5$ over the key space.

\textbf{Inverse Transform Sampling}
We map the binary stream $M_{rand}$ to the continuous Gaussian latent $\mathbf{z}$ using the Inverse Probability Integral Transform.
Let $\Phi(\cdot)$ be the CDF of the standard normal distribution $\mathcal{N}(0, 1)$, and let $\text{ppf} = \Phi^{-1}$. For a standard uniform variable $u_i \sim \mathcal{U}(0, 1)$, the latent value $z_i$ is sampled as:
\begin{equation}
    z_i = \text{ppf}\left( \frac{u_i + M_{rand}[i]}{2} \right).
    \label{eq:sampling_formula}
\end{equation}
This effectively maps bit 0 to the negative half-Gaussian and bit 1 to the positive half-Gaussian.

\textbf{Theoretical Performance-Losslessness}
We now prove that the watermarked latent $\mathbf{z}^s$ follows the same distribution as the standard Gaussian initialization, drawing on the complexity-theoretic definition of steganographic security~\cite{hopper2002provably}.
\begin{theorem}
\label{thm:lossless}
mAVE is performance-lossless under chosen watermark tests. That is, for any polynomial-time tester $\mathcal{A}$ and key $K_{sess} \leftarrow \text{KeyGen}(1^\rho)$,
\begin{equation}
    |\Pr[\mathcal{A}(\mathbf{z}^s)=1] - \Pr[\mathcal{A}(\mathbf{z})=1]| < \text{negl}(\rho),
\end{equation}
where $\mathbf{z}^s$ is the watermarked latent and $\mathbf{z} \sim \mathcal{N}(0,I)$.
\end{theorem}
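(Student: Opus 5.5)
The plan is a standard hybrid argument with two steps: first replace the ChaCha20 keystream by truly uniform bits, then show that with uniform bits the sampling formula \eqref{eq:sampling_formula} yields exactly $\mathcal{N}(0,I)$. Concretely, let $\mathcal{H}_0$ denote the real experiment, where $\mathbf{z}^s$ is produced from $M_{rand}=B^{diff}\oplus\mathcal{K}$ with $\mathcal{K}=\text{ChaCha20}(K_{sess})$. Let $\mathcal{H}_1$ be the same experiment but with $\mathcal{K}$ replaced by a uniformly random string $U$ of the same length. The entangled grid $B^{diff}$ (for both $\mathbf{z}_v$ and $\mathbf{z}_a$, including the hash-bound positions $h_v$) is treated as an arbitrary fixed, or even adversarially chosen, string independent of the keystream. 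This is the ``chosen watermark'' aspect: the tester may pick the payload.

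\textbf{Step 1 (computational hop).} We show $|\Pr[\mathcal{A}(\mathcal{H}_0)=1]-\Pr[\mathcal{A}(\mathcal{H}_1)=1]|\le\text{negl}(\rho)$ by reduction to PRG security of ChaCha20. Given a tester $\mathcal{A}$ with non-negligible advantage, we build a distinguisher $\mathcal{D}$. It receives a string $\kappa$ that is either $\text{ChaCha20}(K)$ or uniform, forms $B^{diff}\oplus\kappa$, samples fresh $u_i\sim\mathcal{U}(0,1)$, computes $z_i=\text{ppf}((u_i+M[i])/2)$, and outputs $\mathcal{A}(\mathbf{z})$. This simulation is efficient, so $\mathcal{D}$'s advantage equals $\mathcal{A}$'s.

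The main obstacle is key independence. The keystream key $K_{sess}$ and the HMAC-derived base bits inside $B_v,B_a$ are all derived from the same $K_{sess}$, so $B^{diff}$ is not literally independent of the PRG seed. I would handle this in one of two ways, trying the first before the second:
\begin{itemize}
\item Model SHA-256/HMAC as a PRF (or random oracle). Then the ChaCha20 key and the HMAC outputs under distinct labels are jointly pseudorandom and independent. One extra hybrid replaces them by independent uniform keys, which reduces to the setting above.
\item Alternatively, state the theorem for a domain-separated ChaCha20 key $\text{HMAC}(K_{sess},\text{``stream''})$. This gives the same independence directly.
\end{itemize}
Either way, the secrecy of $m$ established in \eqref{eq: key} ensures the adversary never sees $K_{sess}$.

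\textbf{Step 2 (exact distributional identity).} In $\mathcal{H}_1$, each $M[i]=B^{diff}[i]\oplus U[i]$ is an i.i.d.\ fair bit, independent of $B^{diff}$. This holds even though $B^{diff}$ has block repetitions, since $U$ is fresh per coordinate. Conditioned on $M[i]=b$, the quantity $(u_i+b)/2$ is uniform on $[b/2,(b+1)/2]$. Mixing over $b$ with weight $1/2$ each gives $\mathcal{U}(0,1)$, so $(u_i+M[i])/2\sim\mathcal{U}(0,1)$, and the coordinates are i.i.d.\ because the $u_i$ and $U[i]$ are independent. By the inverse probability integral transform, $z_i=\Phi^{-1}(\cdot)\sim\mathcal{N}(0,1)$ i.i.d. Hence $\mathbf{z}^s$ in $\mathcal{H}_1$ is distributed exactly as $\mathbf{z}\sim\mathcal{N}(0,I)$ over $\mathcal{Z}_v\times\mathcal{Z}_a$, jointly for both modalities, and $\mathcal{A}$'s advantage between $\mathcal{H}_1$ and the ideal distribution is zero. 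Combining the two steps by the triangle inequality gives the claimed negligible bound.

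As a remark, I would note that the additive $\boldsymbol{\eta}$ from the threat model plays no role here, and that finite-precision floating-point effects are ignored. Finally, losslessness of the generated content follows because the denoiser is a fixed efficient function: any efficient tester on outputs composes with the generator into a tester on latents.
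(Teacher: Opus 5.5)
Your proposal is correct and follows essentially the paper's route. The paper likewise shows that a uniform bit mixed over the two half-Gaussians of Eq.~\eqref{eq:sampling_formula} marginalizes exactly to $\mathcal{N}(0,1)$ (the appendix generalizes this to $2^l$ quantile bins), and then turns any successful tester, composed with the generation pipeline, into a ChaCha20 distinguisher. You also note that $B^{diff}$ contains HMAC bits keyed by the same $K_{sess}$ that keys ChaCha20; the paper silently skips this, and since plain PRF security of HMAC does not cover that key reuse, it is your random-oracle or domain-separation variant that actually closes it.
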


\begin{proof}
Let $f(z)$ denote the PDF of $\mathcal{N}(0,1)$ and $\Phi$ its CDF. Since ChaCha20 is a computationally secure stream cipher, the randomized watermark $M_{rand} = B^{diff} \oplus \mathcal{K}$ is computationally indistinguishable from a uniform binary string, \ie, $P(M_{rand}[i]=y)=\frac{1}{2}$ for $y\in\{0,1\}$. By Eq.~\eqref{eq:sampling_formula}, the conditional density of $z_i$ given $M_{rand}[i]=y$ is $p(z_i \mid y) = 2f(z_i)\cdot\mathbb{1}[z_i \in \mathcal{R}_y]$ where $\mathcal{R}_0 = (-\infty,0]$ and $\mathcal{R}_1 = (0,+\infty)$. Marginalizing over $M_{rand}[i]$:
\begin{equation}
    p(z_i) = \sum_{y\in\{0,1\}} p(z_i \mid y)\,P(y) = 2f(z_i)\cdot\tfrac{1}{2} + 2f(z_i)\cdot\tfrac{1}{2} = f(z_i).
\end{equation}
Thus $\mathbf{z}^s \sim \mathcal{N}(0,I)$, so the sampling process $S(\cdot)$ driven by $M_{rand}$ and the standard random sampling are equivalent. Now assume, for contradiction, that $|\Pr[\mathcal{A}(\mathbf{z}^s)=1] - \Pr[\mathcal{A}(\mathbf{z})=1]| = \delta$ for non-negligible $\delta$. Since the denoising decoder $\mathcal{Q}(\cdot)$ is a deterministic polynomial-time function, substituting into the generation pipeline gives a distinguisher for ChaCha20 output vs.\ truly random bits with advantage $\delta$, contradicting its computational security. Hence $\delta$ must be negligible.
\end{proof}

\subsection{Detect: Joint Inversion \& Verification}
\label{sec:detection}

Detection is modeled as projecting a query sample $\mathbf{x}_{query}$ back onto the noise space to verify if it resides on $\mathcal{M}$.

\textbf{Joint Flow Inversion}
Unlike baseline methods that require separate processes for video (inversion) and audio (neural network encoders like AudioSeal), mAVE leverages the unified architecture of native bimodal models. We perform a single \textbf{Joint Inversion} pass, solving the ODE backwards from $t=1$ to $t=\delta_t$:
\begin{equation}
    \tilde{\mathbf{z}}_0 = \mathbf{x}_{query} + \int_{1}^{\delta_t} v_\theta(\mathbf{z}_t, t, c)  dt.
\end{equation}
Crucially, because the underlying joint audio-visual models employ Rectified Flow~\cite{liu2022flow} with straight transport trajectories, the ODE paths have good linearity, yielding low numerical error. This property is what enables accurate watermark recovery even with very few inversion steps (empirically, 5 steps suffice; see \cref{sec:experiments}). Further details on solver discretization and comparison with stochastic DDIM inversion are provided in the Appendix.
\textbf{Efficiency Analysis.} Because targeted models denoise audio and video jointly in a single forward pass, the ODE inversion that recovers the shared latent $\tilde{\mathbf{z}}_0$ simultaneously yields both $\hat{B}_v$ and $\hat{B}_a$ without auxiliary encoder for AudioSeal~\cite{saharia2022photorealistic} or additional inversion is needed. Consequently, the detection cost of mAVE is identical to that of a video-only latent watermark such as VideoShield~\cite{hu2025videoshield}, while providing cross-modal binding information.

\textbf{Decoding and Verification Strategy}
Due to the numerical discretization gap (truncating at $t=\epsilon$), the recovered latents $\tilde{\mathbf{z}}_0$ may exhibit minor drift. Thanks to the high redundancy from block-wise diffusion, we find that Standard Zero-Thresholding suffices for high-accuracy decoding: $\tilde{b}[i] = \mathbb{1}[\tilde{\mathbf{z}}_0[i] > 0]$.
While we explored advanced decoding methods such as Median Thresholding to mitigate inversion drift, empirical results showed negligible improvement over the simpler Zero-Thresholding (detailed comparison in Appendix).
After recovering the bit grids $\hat{B}_v$ and $\hat{B}_a$, we perform:
1.  \textbf{Synchronization:} Extract the Time-Template to find the optimal temporal offset.
2.  \textbf{Metrics Calculation:} Compute the Bit Accuracy ($\text{BA}_v, \text{BA}_a$) and the Binding Consistency Score ($\text{Score}_{bind}$).
3.  \textbf{Final Verdict:} The sample is classified as Authentic iff:
    \begin{equation}
        \mathcal{D}(\mathbf{x}) = (\text{BA}_v > \tau_{acc}) \land (\text{BA}_a > \tau_{acc}) \land (\text{Score}_{bind} > \tau_{bind}).
        \label{eq:and_gate}
    \end{equation}
This strict intersection logic ensures that any partial manipulation results in a rejection.

\subsection{Security Analysis}
\label{sec:security}

The security of the proposed system, particularly against Swap Attacks, is theoretically guaranteed.

\begin{theorem}
\label{thm:security}
For a binding sequence of length $N = |\mathcal{I}_{bind}|$ and a detection threshold $\tau_{bind} > 0.5$, the probability that a swapped pair successfully passes the binding check decays exponentially with $N$.
\end{theorem}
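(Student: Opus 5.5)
Under a Swap Attack the query pair is $(\mathbf{x}_v^{(A)}, \mathbf{x}_a^{(B)})$, with video and audio taken from two independent sessions $A \neq B$. After joint inversion and zero-thresholding we recover $\hat{B}_v$ from session $A$ and $\hat{B}_a$ from session $B$. The binding score compares the recovered audio binding bits $\hat{B}_a[i]$, $i\in\mathcal{I}_{bind}$, with the digest recomputed from the recovered video grid, $\hat h_v=\text{SHA-256}(\hat{B}_v)$:
\begin{equation}
  \text{Score}_{bind}=\frac{1}{N}\sum_{i\in\mathcal{I}_{bind}} X_i,\qquad X_i=\mathbb{1}\big[\hat{B}_a[i]=\hat h_v[\phi(i)]\big].
\end{equation}
The plan is to show that under $H_1$ the $X_i$ behave as i.i.d.\ Bernoulli$(1/2)$ variables, and then apply Hoeffding's inequality to the event $\text{Score}_{bind}>\tau_{bind}$.

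\textbf{Step 1 (random-oracle modelling).} In session $B$, the audio binding bits are $h_{v}^{(B)}[\phi(i)]$, a digest of $B_v^{(B)}$. This digest is combined with the session-$B$ keystream and then randomized, inverted and thresholded. The check uses $\hat h_v$, which is a function of session-$A$ data only. We model SHA-256 as a random oracle. Since $B_v^{(A)}\neq B_v^{(B)}$ except with negligible collision probability (they carry different HMAC-derived base bits and indices), the two digests are independent uniform strings. We also use that the sessions' keys and noises are independent, which the earlier ChaCha20 argument from \cref{thm:lossless} supports. Conditioning on $\hat h_v$, it then follows that each $X_i$ is Bernoulli$(1/2)$ and the $X_i$ are mutually independent, up to a negligible additive term.

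This step is the main obstacle. Because of block-wise diffusion, several latent positions replicate the same bit, so $X_i$ should be defined per distinct digest bit rather than per latent entry; $N$ then counts distinct binding bits. Inversion drift only flips recovered audio bits independently of $\hat h_v$, so the $1/2$ mean is preserved. I will state these modelling assumptions explicitly.

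\textbf{Step 2 (Hoeffding).} We have $\mathbb{E}[\text{Score}_{bind}]=1/2$ with bounded summands in $[0,1]$. Writing $\epsilon=\tau_{bind}-\tfrac12>0$, Hoeffding's inequality gives
\begin{equation}
  \Pr[\text{Score}_{bind}>\tau_{bind}\mid H_1]\le \exp\!\big(-2N(\tau_{bind}-\tfrac12)^2\big)+\text{negl}(\rho),
\end{equation}
which decays exponentially in $N$.

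\textbf{Step 3 (conclusion).} Acceptance under \eqref{eq:and_gate} requires the binding check to pass, so the acceptance probability is bounded by the same quantity. For example, with $N=256$ and $\tau_{bind}=0.75$ the bound is $e^{-32}$, which meets the requirement $P(\text{accept}\mid H_1)\le\text{negl}(\rho)$ from \cref{sec:threat_model}.
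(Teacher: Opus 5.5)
Your proposal follows the paper's proof: under $H_1$ it uses the random oracle model for SHA-256 and session-key separation to make the match indicators $X_i$ i.i.d.\ Bernoulli$(1/2)$, then applies Hoeffding to get $\exp\!\big(-2N(\tau_{bind}-\tfrac12)^2\big)$, and, like the paper, it assumes rather than proves that inversion drift is independent of the target hash. The paper additionally remarks that the bound extends to white-box adversaries, since $\mathcal{I}_{bind}$ and the target hash depend on the server-side secret $m$; also, its deployed check compares against $H_{ideal}=\text{SHA-256}(B_v^{ideal})$ rather than the hash of the noisy $\hat{B}_v$, and your argument carries over to that version unchanged.
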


\begin{proof}
Consider the hypothesis testing for a Swap Attack ($H_1$), where the audio stream is mismatched with the video. Due to the avalanche effect of SHA-256 and session-specific key derivation, the recovered audio bits $\hat{B}_a$ and the hashed video bits $\text{SHA-256}(\hat{B}_v)$ are statistically independent. Specifically, under the random oracle model, the bits $\hat{B}_a[i]$ and $\text{SHA-256}(\hat{B}_v)[i]$ are computationally indistinguishable from independent uniform bits. Consequently, for each bit index $i \in \mathcal{I}_{bind}$, the match event $X_i = \mathbb{1}[\hat{B}_a[i] = \text{SHA-256}(\hat{B}_v)[\phi(i)]]$ follows a Bernoulli distribution with parameter $p=0.5$, and the $\{X_i\}$ are mutually independent. The observed binding score is the empirical mean $S = \frac{1}{N}\sum_{i=1}^{N} X_i$.

According to \textbf{Hoeffding's Inequality}, the probability of a False Positive (i.e., the score $S$ exceeding $\tau_{bind}$ purely by chance) is bounded by:
\begin{equation}
    P(S \ge \tau_{bind} \mid H_1) \le \exp\left( -2N (\tau_{bind} - 0.5)^2 \right).
\end{equation}

This bound extends to adaptive (white-box) adversaries with full model access. Since both the binding indices $\mathcal{I}_{bind}$ and the target hash $H_{ideal}$ are deterministic functions of $K_{sess}$, which depends on the server-side secret $m$, the adversary cannot evaluate $S$ or its gradient; the optimization objective is \textit{encrypted}. Any gradient-based attack therefore degrades to blind search over $2^N$ bits, where the Hoeffding bound applies directly.
For our default configuration ($N=128$, $\tau_{bind}=0.8$), the evasion probability is upper-bounded by $P_{fp} < 9.86 \times 10^{-11}$.
\end{proof}

\section{Experiments}
\label{sec:experiments}

\subsection{Implementation Details.}
We evaluate primarily on LTX-2~\cite{hacohen2026ltx2}, with additional validation on MOVA-720p~\cite{openmossteam2026mova}. Videos are generated at $256{\times}256$ resolution, 24 fps, with $L{=}512$-bit payloads embedded via block-wise repetition factors $(k_c, k_t, k_h, k_w){=}(3,1,4,4)$. For I2AV tasks, input images are generated via Stable Diffusion~\cite{blattmann2023stable}. Each generated sample is tagged with a plaintext $L_I{=}32$-bit index $I$ embedded in cleartext via repetition coding, while the corresponding high-entropy secret payload $m$ is stored on a server-side database keyed by $I$. As formulated in \cref{eq: key}, the session key $K_{sess}$ is securely derived using the secret $m$ and the normalized prompt as $E_p$ Randomization uses ChaCha20~\cite{bernstein2008chacha} initialized with $K_{sess}$ and the final latents are obtained via truncated inverse sampling to avoid outliers (details in Appendix).

\textbf{Evaluation Prompts.}
We curate 250 text prompts from VBench~\cite{huang2024vbench} spanning 8 semantic categories. Each prompt is paired with 4 random seeds, yielding 1{,}000 diverse audio-visual samples (full list in Appendix).

\textbf{Statistical Binding Decision.} Directly hashing noisy recovered video bits is catastrophic. Our protocol circumvents this via four steps: \textbf{(1) Index Extraction:} Recover the plaintext $\hat{I}$ from the video grid via majority voting. \textbf{(2) Server Lookup:} Query the database with $\hat{I}$ to retrieve the secret $m$. \textbf{(3) Ideal Hash:} Re-derive $K_{sess}$, regenerate the noise-free grid $B^{ideal}_v$, and compute $H_{ideal} = \text{SHA-256}(B^{ideal}_v)$. \textbf{(4) Soft Matching:} Measure bit-wise accuracy between $H_{ideal}$ and the noisy audio bits $\hat{B}_a$ at $\mathcal{I}_{bind}$, classifying as Bound if $S_{sync} > \tau_{bind}$.
Since only the non-confidential $I$ is publicly recoverable, this transforms the brittle cryptographic check into a robust statistical test.

\subsection{Extraction Performance}
\label{sec:exp_extraction}

We evaluate watermark extraction using Bit Accuracy (BA). For each payload bit, a \textbf{Majority Voting} decision rule aggregates the $k_{all}$ redundant copies produced by block-wise diffusion: the final bit is set to 1 if the majority of its copies are 1, and 0 otherwise. \textbf{Bit Accuracy} is then measured as the proportion of payload bits whose majority-vote decision matches the originally embedded value (formal definitions in Appendix). To establish a rigorous benchmark, we evaluate mAVE against several modality-specific watermarking baselines~\cite{chen2023wavmark,fernandez2024videoseal,hu2025videoshield,liu2024timbre,sanroman2024proactive}.

\textbf{Results.} \cref{tab:extraction} compares extraction across LTX-2 and MOVA-720p on T2AV and I2AV tasks (MOVA-720p reports TI2AV only, as it requires a reference image). We report BAs for mAVE and baselines except AudioSeal, for which a probability score is reported. mAVE achieves comparable recovery for both modalities, demonstrating that the entangled manifold construction is practically invertible. 

\begin{table}[h]
\centering
\caption{\textbf{Watermark Extraction Performance.} Unlike parameterized neural extractors, ODE inversion introduces numerical drift, causing mAVE and VideoShield BA to marginally saturate below 1.0. Crucially, since unwatermarked content strictly yields a baseline BA of $\approx 0.5$, mAVE's >0.91 recovery provides an overwhelmingly distinct margin for near-perfect statistical detection.}
\label{tab:extraction}
\setlength{\tabcolsep}{4pt}
\resizebox{\textwidth}{!}{%
\begin{tabular}{l|l|cccc}
\toprule
\multicolumn{6}{c}{\textbf{Video Modality Extraction}} \\
\midrule
\textbf{Model} & \textbf{Task} & \textbf{VideoShield} & \textbf{VideoSeal} & \textbf{mAVE (Ours)} \\
\midrule
\multirow{2}{*}{LTX-2} & T2AV & 0.953 & 1.000 & 0.936 \\
& I2AV & 0.946 & 0.995 & 0.934 \\
\midrule
MOVA-720p & TI2AV & 0.952 & 0.998 & 0.949 \\
\bottomrule
\toprule
\multicolumn{6}{c}{\textbf{Audio Modality Extraction}} \\
\midrule
\textbf{Model} & \textbf{Task} & \textbf{AudioSeal$^\dagger$} & \textbf{WavMark} & \textbf{Timbre} & \textbf{mAVE (Ours)} \\
\midrule
\multirow{2}{*}{LTX-2} & T2AV & 1.000 & 0.994 & 0.993 & 0.915 \\
& I2AV & 1.000 & 0.995 & 1.000 & 0.917 \\
\midrule
MOVA-720p & TI2AV & 1.000 & 0.993 & 0.999 & 0.928 \\
\bottomrule
\multicolumn{6}{l}{\small $^\dagger$ AudioSeal reports detection probability score $\in[0,1]$; all others report Bit Accuracy.} \\
\end{tabular}%
}
\end{table}

\subsection{Fidelity: Verification of Losslessness}
\label{sec:exp_fidelity}

To empirically validate \cref{thm:lossless} (Performance-Losslessness), we assess the generation quality across visual, auditory, and alignment dimensions.
We employ VBench metrics for video quality: \textit{Subject Consistency} (Sub), \textit{Motion Smoothness} (Mot), \textit{Dynamic Degree} (Dyn), \textit{Background Consistency} (Back), and \textit{Imaging Quality} (Img). For audio-visual quality, we use \textbf{CLAP Score}~\cite{laionclap2023} (Audio-Text relevance) and \textbf{ImageBind}~\cite{girdhar2023imagebind} (AV alignment). Synchronization is measured via \textbf{SyncNet}~\cite{raina2023syncnet} confidence.

\begin{table}[h]
\centering
\caption{\textbf{Fidelity Comparison.} mAVE achieves scores statistically indistinguishable from the Clean baseline, confirming that our entangled initialization is invisible to the generation backbone.}
\label{tab:fidelity}
\setlength{\tabcolsep}{4pt}
\resizebox{0.9\textwidth}{!}{%
\begin{tabular}{l|cccccc|cc|c}
\toprule
\textbf{Method} & \textbf{Sub} $\uparrow$ & \textbf{Mot} $\uparrow$ & \textbf{Dyn} $\uparrow$ & \textbf{Back} $\uparrow$ & \textbf{Img} $\uparrow$ & \textbf{Avg} & \textbf{CLAP} $\uparrow$ & \textbf{ImBind} $\uparrow$ & \textbf{Sync} $\uparrow$ \\ \midrule
Clean (No WM)   & 0.983 & 0.988 & \textbf{0.715} & 0.966 & 0.524 & \textbf{0.835} & \textbf{0.442} & 0.117 & 0.965 \\
Uncoupled (Base)& 0.994 & 0.988 & 0.684 & 0.982 & 0.454 & 0.820 & 0.425 & 0.098 & 0.965 \\
\textbf{mAVE (Ours)} & \textbf{0.998} & \textbf{0.991} & 0.669 & \textbf{0.985} & \textbf{0.529} & 0.834 & 0.426 & \textbf{0.132} & \textbf{0.966} \\ \bottomrule
\end{tabular}%
}
\end{table}

\textbf{Results.} As shown in \cref{tab:fidelity}, mAVE incurs negligible degradation compared to the Clean baseline. Notably, our Time-Template injection does not disrupt synchronization, satisfyingly achieving parity with Clean generation (0.966 vs 0.965).

\subsection{Binding Security}

This section evaluates the core contribution of mAVE: defense against Swap Attacks. As demonstrated in \cref{sec:exp_fidelity}, while SyncNet successfully evaluates perceptual synchronization, its reliance on statistical heuristics creates an exploitable gap for security-critical applications. Here, we show how mAVE bridges this gap with cryptographic guarantees.

\textbf{Baselines.} We compare against: (1) \textbf{Weak Baseline (Uncoupled)}: VideoShield + AudioSeal verified independently ($Video_{wm} \wedge Audio_{wm}$); and (2) \textbf{Strong Baseline (Uncoupled + SyncNet)}: adds a heuristic synchronization check requiring SyncNet confidence $> \tau_{sync}$. We report standard classification accuracy over True/False Positives and Negatives.

\begin{table}[h]
\centering
\caption{\textbf{Swap Attack Defense.} Comparison of authentication accuracy between baselines and mAVE.}
\label{tab:swap_defense}
\setlength{\tabcolsep}{6pt}
\resizebox{0.95\linewidth}{!}{
\begin{tabular}{lccccc}
\toprule
Method & TP (Auth) $\uparrow$ & FN (Auth) $\downarrow$ & TN (Swap) $\uparrow$ & FP (Swap) $\downarrow$ & Acc $\uparrow$ \\
\midrule
Weak Base (Uncoupled) & 100\% & \textbf{0\%} & 0\% & 100\% & 50.0\% \\
Strong Base (Uncoupled + SyncNet) & 96.2\% & 3.8\% & 76.2\% & 23.8\% & 86.2\% \\
mAVE (Ours) & \textbf{99.8\%} & 0.2\% & \textbf{100\%} & \textbf{0\%} & \textbf{99.9\%} \\
\bottomrule
\end{tabular}
}
\end{table}

\textbf{Results.} \cref{tab:swap_defense} illustrates the "Binding Vulnerability" of current SOTA methods. The Weak Baseline yields an Accuracy of 50\%, effectively acting as a random guesser for authentication purposes. While the Strong Baseline utilizing SyncNet identifies a portion of the swapped pairs, it exposes the inherent flaw of heuristic-based defenses: SyncNet heavily relies on specific semantic cues and struggles to confidently verify synchronization in ambient or highly dynamic scenes, leading to a significant False Negative rate (3.8\%) on authentic pairs and failing to block all well-crafted fakes (23.8\% FP). 

In stark contrast, mAVE achieves 99.9\% Accuracy. Because mAVE embeds the binding constraint intrinsically within the initial noise manifold rather than relying on post-hoc semantic content, it functions universally across all video types. \cref{fig:security_analysis} visualizes the distinct separation between the distributions, empirically proving our security bound.

\begin{figure}[h]
  \centering
  \includegraphics[width=\textwidth]{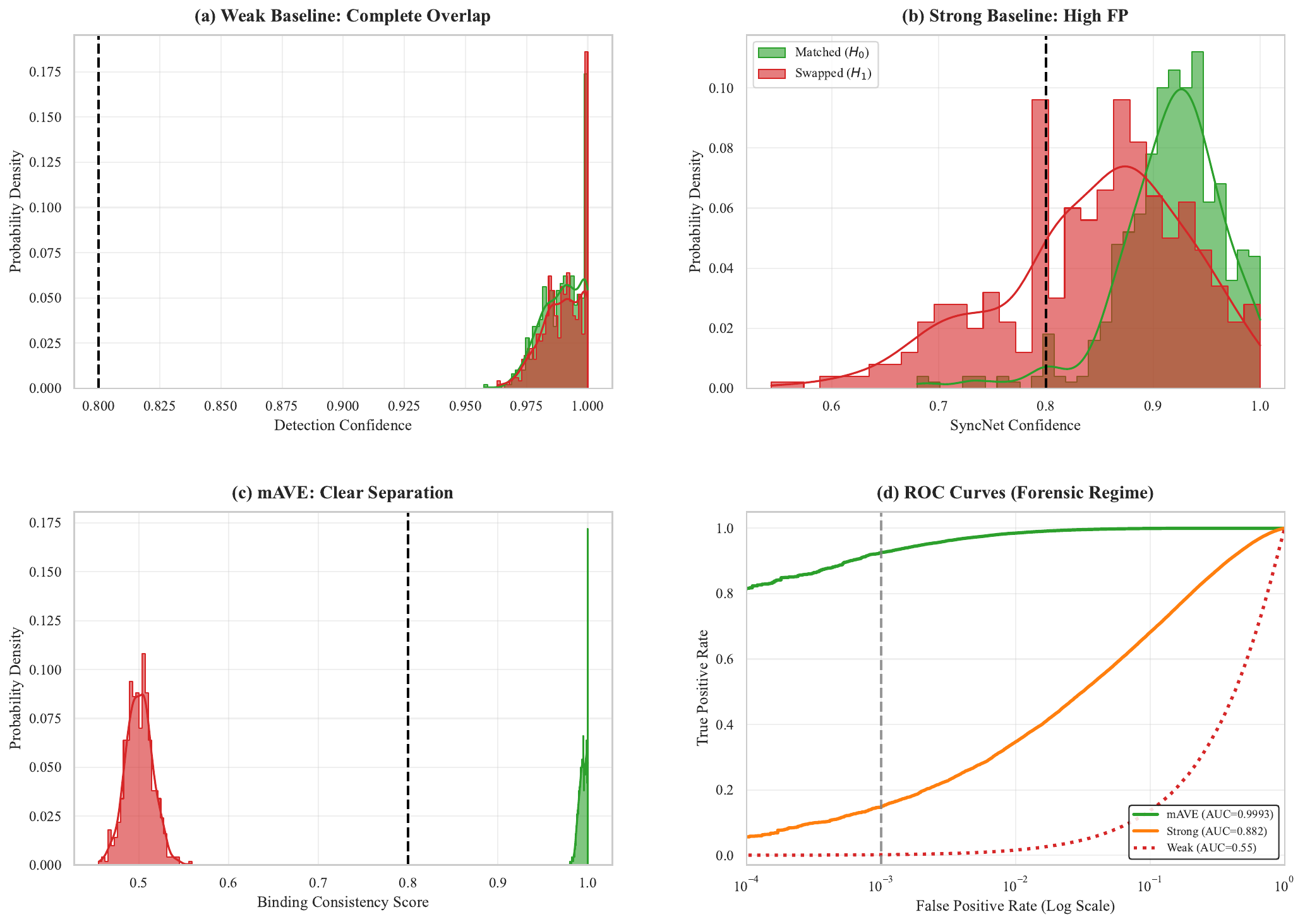}
  \caption{\textbf{Security Analysis.} (a) \textbf{Weak Baseline Failure:} Uncoupled watermarks exhibit complete distributional overlap. (b) \textbf{Strong Baseline Limitation:} Adding SyncNet improves separation but still allows significant overlap on ambiguous samples. (c) \textbf{mAVE Separation:} Our method cryptographically enforces binding, creating definitive separation between Authentic and Swapped pairs. (d) \textbf{Forensic ROC:} mAVE maintains high TPR at low FPR, while heuristic baselines perform poorly for high-security applications.}
  \label{fig:security_analysis}
\end{figure}

\subsection{Robustness}
\label{sec:exp_robustness}

We evaluate robustness under standard video and audio attacks. Temporal attacks are reported in the Appendix, as resisting frame-rate alterations is an inherent limitation of latent watermarking with fixed repetition factors.

\begin{table}[h!]
\centering
\caption{\textbf{Cross-Modal Robustness.} Comparison of mAVE against unimodal baselines (VideoShield, AudioSeal) under various attacks. Values are Bit Accuracy (Video/mAVE) and Detection Score (AudioSeal). While mAVE's BA drops under severe attacks, it remains significantly higher than the unwatermarked random baseline of 0.5. As derived in \cref{sec:security}, these margins are entirely sufficient for near-perfect statistical detection.}
\label{tab:joint_robustness}
\resizebox{\textwidth}{!}{%
\begin{tabular}{l|ccccccc}
\toprule
\multicolumn{8}{c}{\textbf{Video Modality Robustness}} \\ 
\midrule
\textbf{Attack} & \textbf{FrameAvg} & \textbf{Noise} & \textbf{G-Blur} & \textbf{M-Blur} & \textbf{S\&P} & \textbf{Resize} & \textbf{Bright} \\
\textit{Params} & $N=3$ & $\sigma=0.1$ & $r=2$ & $k=3$ & $p=0.1$ & $0.3\times$ & $f=3$ \\ \midrule
Baseline (VideoShield) & 0.948 & 0.907 & 0.947 & 0.952 & 0.863 & 0.951 & 0.943 \\
\textbf{mAVE-Video (Ours)} & 0.927 & 0.892 & 0.939 & 0.938 & 0.855 & 0.936 & 0.914 \\ 
\bottomrule
\toprule
\multicolumn{8}{c}{\textbf{Audio Modality Robustness}} \\ 
\midrule
\textbf{Attack} & \textbf{MP3} & \textbf{Comp.} & \textbf{Lowpass} & \textbf{Resample} & \textbf{Noise} & \textbf{Quant.} & - \\
\textit{Params} & 64k & - & 8kHz & 16k$\to$8k & 20dB & 8-bit & - \\ \midrule
Baseline (AudioSeal) & 0.97 & 1.00 & 1.00 & 1.00 & 0.98 & 0.92 & - \\
\textbf{mAVE-Audio (Ours)} & 0.85 & 0.89 & 0.86 & 0.88 & 0.77 & 0.83 & - \\
\bottomrule
\end{tabular}%
}
\end{table}

\textbf{Results.} As shown in \cref{tab:joint_robustness}, mAVE demonstrates strong resilience. This robustness stems from the redundancy of our \textit{Watermark Diffusion} strategy (\cref{sec:embedding}), which effectively acts as a repetition code, allowing the Zero-Thresholding decoder to recover the correct signal through majority voting logic implicitly.

\subsection{Ablations}

\textbf{Hash Length ($N$).} We varied the binding hash length $N \in \{16, 32, 64, 128, 256\}$. Theoretically, increasing $N$ provides stronger cryptographic security by exponentially decreasing the FP rate ($P_{fp}$) under Swap Attacks. However, a larger $N$ occupies more payload capacity, which could reduce the spatial-temporal redundancy per bit and potentially degrade extraction robustness. Remarkably, our empirical results indicate no such trade-off. The extraction performance remains highly stable across all evaluated lengths: Video Bit Accuracy consistently $>95\%$ and Audio Bit Accuracy maintaining around $87\% \sim 91\%$ with no obvious degradation trend. This highlights the immense capacity of our Watermark Diffusion strategy, allowing practical deployments to adopt highly secure configurations to guarantee a negligible $P_{fp}$ without compromising recoverability.

\textbf{Inversion Steps.} We analyzed the impact of ODE inversion steps, scaling from 50 down to a single step. mAVE leverages the straight probability transport trajectories of Rectified Flow to achieve exceptional efficiency. Our tests reveal that reducing the inversion steps from 50 to 5 causes almost zero degradation, maintaining near-perfect Video Bit Accuracy ($>95\%$) and stable Audio Bit Accuracy ($>85\%$). Discretization errors only become prominent at an extreme 1-step inversion, where Audio Bit Accuracy drops to $78.0\%$, though Video Bit Accuracy remains surprisingly resilient at $85.7\%$. This robust performance confirms that mAVE can execute highly accurate, few-step manifold projections, paving the way for fast or real-time detection APIs.
\section{Conclusion}
\label{sec:conclusion}

In this work, we identified a critical security gap in the emerging class of joint audio-visual generation models: the \textit{Binding Vulnerability}, which allows adversaries to decouple and swap modalities without detection. To address this, we proposed \textbf{mAVE} (Manifold Audio-Visual Entanglement), the first training-free framework that embeds cryptographic binding directly into the generative initialization.

By formalizing the watermarking process as a projection onto a Legitimate Entanglement Manifold, mAVE achieves theoretically guaranteed performance-losslessness (\cref{thm:lossless}) and provides an exponential security bound against Swap Attacks (\cref{thm:security}). Extensive experiments on LTX-2~\cite{hacohen2026ltx2} and MOVA~\cite{openmossteam2026mova} demonstrate that mAVE significantly outperforms unimodal baselines in distinguishing authentic pairs from manipulated ones, while maintaining SOTA generation quality.

\textbf{Limitations.} Due to ODE discretization at $t = \delta_t$, deterministic drift prevents the raw Bit Accuracy from reaching a theoretical 1.0 (saturating at $\approx 0.9$). However, because the unwatermarked expectation is strictly 0.5, this upper bound is practically inconsequential for threshold-based verification, retaining full statistical significance.

Nevertheless, mAVE demonstrates that the joint initialization space of joint audio-visual generation models can serve as a practical and mathematically grounded anchor for cross-modal copyright protection.


%
%
\bibliographystyle{splncs04}
\bibliography{main}

\clearpage
\setcounter{section}{0}
\renewcommand{\thesection}{\Alph{section}}
\renewcommand{\thesubsection}{\thesection.\arabic{subsection}}
\renewcommand{\theequation}{\thesection.\arabic{equation}}
\renewcommand{\thefigure}{\thesection.\arabic{figure}}
\renewcommand{\thetable}{\thesection.\arabic{table}}

\begin{center}
    \Large\textbf{Supplementary Material}
\end{center}

\section{Full Proofs of Theoretical Guarantees}
\label{app:proofs}
\setcounter{equation}{0}

\subsection{Performance-Losslessness (\cref{thm:lossless})}
\label{app:losslessness}

We provide the complete proof that mAVE is performance-lossless under \textit{chosen watermark tests}~\cite{hopper2002provably}. The proof generalizes the Inverse Transform Sampling procedure to $l$-bit randomized watermarks, following the framework of steganographic security.

\subsubsection{Formal Definition}

\textbf{Definition (Chosen Watermark Tests).}
Let $X^s$ denote a watermarked sample and $X$ a normally generated sample. A watermarking scheme is \textit{performance-lossless under chosen watermark tests} if, for any polynomial-time tester $\mathcal{A}$ and key $K_{sess} \leftarrow \text{KeyGen}(1^\rho)$:
\begin{equation}
    |\Pr[\mathcal{A}(X^s)=1] - \Pr[\mathcal{A}(X)=1]| < \text{negl}(\rho),
\end{equation}
where the tester $\mathcal{A}$ can choose any watermark payload and observe the resulting watermarked output. This is analogous to the \textit{chosen hidden text attacks} formulation in provable steganography~\cite{hopper2002provably}.

\subsubsection{General $l$-Bit Distribution-Preserving Sampling}

We first establish the distributional result in full generality. Let each latent dimension represent an $l$-bit randomized watermark, where $l \ge 1$ is a design parameter controlling the embedding resolution. Through encryption with a computationally secure stream cipher (ChaCha20~\cite{bernstein2008chacha}), the diffused watermark $B^{diff}$ is transformed into a randomized watermark $m$. Since ChaCha20 is a cryptographically secure PRNG, each $l$-bit symbol of $m$ can be treated as a ciphertext following a discrete uniform distribution:
\begin{equation}
    P(m = y) = \frac{1}{2^l}, \quad y = 0, 1, 2, \ldots, 2^l - 1.
\end{equation}

Let $f(x)$ denote the probability density function of the standard Gaussian distribution $\mathcal{N}(0, I)$, and let $\Phi(\cdot)$ and $\text{ppf} = \Phi^{-1}$ denote the cumulative distribution function (CDF) and quantile function, respectively. We divide $f(x)$ into $2^l$ equal cumulative probability portions. When $m = y = i$, the watermarked latent representation $z^s_T$ falls into the $i$-th interval, and its conditional distribution is:
\begin{equation}
    p(z^s_T \mid m = i) = 
    \begin{cases}
        2^l \cdot f(z^s_T) & \text{if } \text{ppf}\!\left(\frac{i}{2^l}\right) < z^s_T \le \text{ppf}\!\left(\frac{i+1}{2^l}\right), \\[4pt]
        0 & \text{otherwise.}
    \end{cases}
    \label{eq:app_conditional}
\end{equation}
This is a truncated Gaussian restricted to the interval $\bigl[\text{ppf}\!\bigl(\frac{i}{2^l}\bigr),\; \text{ppf}\!\bigl(\frac{i+1}{2^l}\bigr)\bigr]$, with the normalization factor $2^l$ ensuring unit probability mass.

\subsubsection{Step 1: Distributional Equivalence}

\begin{lemma}[Distribution Preservation]
\label{lem:dist_preserve}
The marginal distribution of the watermarked latent $z^s_T$ equals the standard Gaussian $f(x)$, regardless of the watermark content.
\end{lemma}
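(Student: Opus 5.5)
The plan is a direct marginalization over the symbol $m$, followed by a check that the sampler of Eq.~\eqref{eq:sampling_formula} actually realizes the conditional density \eqref{eq:app_conditional}, and finally a lift from one coordinate to the whole latent tensor.

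First I verify the conditional law. I generalize the sampler to $z^s_T = \text{ppf}\bigl((u+i)/2^l\bigr)$ with $u \sim \mathcal{U}(0,1)$. For $z$ in the $i$-th interval, monotonicity of $\Phi$ gives
\[
\Pr[z^s_T \le z \mid m=i] = \Pr\bigl[u \le 2^l\Phi(z) - i\bigr] = 2^l\Phi(z) - i .
\]
Differentiating in $z$ yields $2^l f(z)$ on that interval and $0$ elsewhere, which is exactly \eqref{eq:app_conditional}. Integrating this density over the interval gives $2^l\bigl(\frac{i+1}{2^l}-\frac{i}{2^l}\bigr)=1$, so the normalization is correct. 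The case $l=1$ recovers Eq.~\eqref{eq:sampling_formula}.

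Next I marginalize using $P(m=i)=2^{-l}$. The $2^l$ intervals $\bigl(\text{ppf}(i/2^l),\text{ppf}((i+1)/2^l)\bigr]$ partition $\mathbb{R}$, up to the endpoints at $\pm\infty$. Hence for every $z$ exactly one indicator is nonzero, and
\[
p(z^s_T)=\sum_{i=0}^{2^l-1} 2^{-l}\cdot 2^l f(z^s_T)\,\mathbb{1}[z^s_T\in \mathcal{R}_i]=f(z^s_T).
\]

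The claim ``regardless of the watermark content'' needs care, and I expect this to be the main obstacle. The lemma is exact only if $m$ is truly uniform. With ChaCha20 output the uniformity is computational, not statistical. I would therefore state the lemma in the idealized model, where the keystream $\mathcal{K}$ is uniform and independent of $B^{diff}$. In that model $m = B^{diff}\oplus\mathcal{K}$ is uniform for every fixed $B^{diff}$ by the one-time-pad argument, so the marginal does not depend on the payload. The gap between this idealization and real ChaCha20 output is deferred to the computational-indistinguishability reduction used afterward for \cref{thm:lossless}.

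Finally, to pass from one coordinate to the full latent, I note that the $u_i$ are i.i.d. and, in the idealized model, the symbols of $m$ are i.i.d. uniform across positions. This holds even though $B^{diff}$ is block-replicated, because the XOR with a uniform keystream removes the correlations. The coordinates of $z^s_T$ are therefore independent, each with law $f$, so $z^s_T\sim\mathcal{N}(0,I)$ jointly.
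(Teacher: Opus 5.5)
Your proposal is correct and follows the paper's proof. Marginalize over the uniformly distributed $l$-bit symbol, note that exactly one interval indicator is nonzero at each point so the sum collapses to $2^l f\cdot 2^{-l}=f$, then extend coordinatewise to $\mathcal{N}(0,I)$. Your extra steps make rigorous what the paper states informally or defers to the ChaCha20 reduction in Step~3: deriving the conditional density from the sampler (the paper's Step~2 goes from density to sampler), stating the one-time-pad idealization behind ``regardless of the watermark content,'' and justifying joint independence despite block replication.
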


\begin{proof}
The marginal distribution of $z^s_T$ is obtained by summing over all $2^l$ possible watermark values:
\begin{equation}
    p(z^s_T) = \sum_{i=0}^{2^l - 1} p(z^s_T \mid m = i)\, P(m = i) = \sum_{i=0}^{2^l - 1} p(z^s_T \mid m = i) \cdot \frac{1}{2^l}.
\end{equation}
For any fixed $z^s_T$, exactly one term in the summation is nonzero—namely the unique $i^*$ satisfying $\text{ppf}\!\bigl(\frac{i^*}{2^l}\bigr) < z^s_T \le \text{ppf}\!\bigl(\frac{i^*+1}{2^l}\bigr)$—and contributes $2^l \cdot f(z^s_T) \cdot \frac{1}{2^l} = f(z^s_T)$. Therefore:
\begin{equation}
    p(z^s_T) = f(z^s_T).
\end{equation}
Since this holds for each coordinate independently, $\mathbf{z}^s \sim \mathcal{N}(0, I)$.
\end{proof}

\subsubsection{Step 2: CDF-Based Sampling Procedure}

To implement the conditional sampling efficiently, we derive the CDF of $z^s_T$ given $m = i$. From Eq.~\eqref{eq:app_conditional}:
\begin{equation}
    F(z^s_T \mid m = i) = 
    \begin{cases}
        0 & \text{if } z^s_T < \text{ppf}\!\left(\frac{i}{2^l}\right), \\[4pt]
        2^l \cdot \Phi(z^s_T) - i & \text{if } \text{ppf}\!\left(\frac{i}{2^l}\right) \le z^s_T \le \text{ppf}\!\left(\frac{i+1}{2^l}\right), \\[4pt]
        1 & \text{if } z^s_T > \text{ppf}\!\left(\frac{i+1}{2^l}\right).
    \end{cases}
    \label{eq:app_cdf}
\end{equation}

Given $m = i$, we aim to sample $z^s_T$ within the interval $\bigl[\text{ppf}\!\bigl(\frac{i}{2^l}\bigr),\; \text{ppf}\!\bigl(\frac{i+1}{2^l}\bigr)\bigr]$. Since $F(z^s_T \mid m = i)$ takes values in $[0,1]$ within this interval, sampling from it is equivalent to drawing $u \sim \mathcal{U}(0,1)$ and applying the inverse CDF. Rearranging Eq.~\eqref{eq:app_cdf} by setting $u = F(z^s_T \mid m = i)$ and noting that $\Phi$ and $\text{ppf}$ are inverses, we obtain:
\begin{equation}
    z^s_T = \text{ppf}\!\left(\frac{u + i}{2^l}\right).
    \label{eq:app_general_its}
\end{equation}
Eq.~\eqref{eq:app_general_its} represents the general $l$-bit distribution-preserving sampling formula. The corresponding inverse map for watermark extraction is:
\begin{equation}
    i = \lfloor 2^l \cdot \Phi(z^s_T) \rfloor.
    \label{eq:app_inverse_map}
\end{equation}

\subsubsection{Step 3: Reduction to Stream Cipher Security}

Lemma~\ref{lem:dist_preserve} establishes that $z^s_T \sim \mathcal{N}(0,1)$ when $m$ is truly uniform. The remaining step is to show that the computational indistinguishability of ChaCha20 output from truly random bits lifts to indistinguishability at the sample level.

Let $S(\cdot)$ denote the sequence-driven sampling process defined by Eq.~\eqref{eq:app_general_its}, $\mathcal{Q}(\cdot)$ the deterministic denoising decoder, and $\mathcal{D}$ the deterministic data decoder (e.g., VAE decoder). Suppose, for contradiction, that there exists a polynomial-time tester $\mathcal{A}$ with non-negligible advantage $\delta$:
\begin{equation}
    \bigl|\Pr[\mathcal{A}(\mathcal{D}(\mathcal{Q}(S(m))))=1 \mid m = E(K_{sess}, s^d)] - \Pr[\mathcal{A}(\mathcal{D}(\mathcal{Q}(S(r))))=1 \mid r \leftarrow \{0,1\}^d]\bigr| = \delta,
\end{equation}
where $E(K_{sess}, s^d)$ denotes the ChaCha20 encryption of the diffused watermark $s^d$ under key $K_{sess}$, and $r$ denotes a truly random bit string of the same length. Since $S(\cdot)$, $\mathcal{Q}(\cdot)$, and $\mathcal{D}$ are all deterministic polynomial-time functions, the composite
\begin{equation}
    \mathcal{A}_{\mathcal{D},\mathcal{Q},S}(\cdot) \coloneqq \mathcal{A} \circ \mathcal{D} \circ \mathcal{Q} \circ S(\cdot)
\end{equation}
is itself a polynomial-time distinguisher. Then:
\begin{equation}
    \bigl|\Pr[\mathcal{A}_{\mathcal{D},\mathcal{Q},S}(m)=1 \mid m = E(K_{sess}, s^d)] - \Pr[\mathcal{A}_{\mathcal{D},\mathcal{Q},S}(r)=1 \mid r \leftarrow \{0,1\}^d]\bigr| = \delta.
\end{equation}
This states that $\mathcal{A}_{\mathcal{D},\mathcal{Q},S}$ can distinguish ChaCha20 output from truly random bits with advantage $\delta$, directly contradicting the computational security of ChaCha20~\cite{bernstein2008chacha}. Therefore $\delta$ must be negligible in the security parameter $\rho$, completing the proof that
\begin{equation}
    |\Pr[\mathcal{A}(X^s)=1] - \Pr[\mathcal{A}(X)=1]| < \text{negl}(\rho). \tag*{$\blacksquare$}
\end{equation}

\textbf{Remark.} In all experiments of this paper, we set $l = 1$ (binary partition), which reduces Eq.~\eqref{eq:app_general_its} to the formula used in the main text (Eq.~(7)): $z_i = \text{ppf}\!\bigl(\frac{u_i + M_{rand}[i]}{2}\bigr)$, partitioning the Gaussian into two half-distributions. The general $l$-bit formulation above demonstrates that the distribution-preservation property holds for arbitrary embedding resolutions and is not specific to the binary case.

\subsection{Binding Security (\cref{thm:security})}
\label{app:binding_security}

We provide the full proof that the binding mechanism of mAVE is exponentially secure against Swap Attacks, including formal treatment of both passive and adaptive adversaries.

\subsubsection{Formal Setup}

\textbf{Definition (Swap Attack).} An adversary $\mathcal{A}_{swap}$ has oracle access to the generation model $\mathcal{G}$ and the detection API $\mathcal{D}$. The adversary produces a composite sample $\tilde{\mathbf{x}} = (\mathbf{x}_v^{(A)}, \mathbf{x}_a^{(B)})$ by combining the video track from session $A$ with the audio track from session $B$ ($A \ne B$), and succeeds if the detector outputs Authentic: $\mathcal{D}(\tilde{\mathbf{x}}) = 1$.

\textbf{Binding Verification Protocol.} After joint ODE inversion recovers $(\tilde{\mathbf{z}}_v, \tilde{\mathbf{z}}_a)$, the detector extracts the plaintext index $\hat{I}$, retrieves $m$ from the server, re-derives $K_{sess}$, regenerates the ideal video grid $B_v^{ideal}$, and computes $H_{ideal} = \text{SHA-256}(B_v^{ideal})$. The binding score is:
\begin{equation}
    S_{bind} = \frac{1}{N}\sum_{i=1}^{N} \mathbb{1}[\hat{B}_a[\mathcal{I}_{bind}(i)] = H_{ideal}[\phi(i)]],
    \label{eq:app_binding_score}
\end{equation}
where $N = |\mathcal{I}_{bind}|$ is the number of binding positions, $\phi$ is a fixed mapping from binding indices to hash bit positions, and the sample is classified as Bound iff $S_{bind} > \tau_{bind}$.

\subsubsection{Step 1: Statistical Independence under Swap}

\begin{lemma}[Independence under Mismatched Sessions]
\label{lem:independence}
Under a Swap Attack combining sessions $A$ and $B$ with $A \ne B$, the random variables $\hat{B}_a[\mathcal{I}_{bind}(i)]$ (from session $B$) and $H_{ideal}[\phi(i)]$ (derived from session $A$) are statistically independent for all $i$.
\end{lemma}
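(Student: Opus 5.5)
To prove Lemma~\ref{lem:independence}, I would trace both random variables back to the independent secret material that generates them. Under the Binding Verification Protocol, $H_{ideal}$ depends only on the index $\hat{I}$ recovered from the video track $\mathbf{x}_v^{(A)}$. Index extraction uses majority voting over the unencrypted repetition-coded region of the video grid, so with overwhelming probability $\hat{I}=I^{(A)}$. Then $H_{ideal}=\text{SHA-256}(B_v^{ideal,(A)})$ is a deterministic function of $(m^{(A)}, E_P^{(A)}, I^{(A)})$ only. The random event that $\hat{I}\ne I^{(A)}$ only replaces session $A$'s data by some other fixed session's data. It can be treated the same way, or absorbed into a negligible error term.

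Next I would characterize $\hat{B}_a[\mathcal{I}_{bind}(i)]$. By Eq.~\eqref{eq:audio_binding}, the embedded audio bit at a binding index is $h_v^{(B)}[\phi(i)]$. This equals $\text{SHA-256}(B_v^{(B)})[\phi(i)]$, a function of $(m^{(B)}, E_P^{(B)}, I^{(B)})$. The recovered bit $\hat{B}_a$ is obtained by joint inversion and zero-thresholding. Once the video track is fixed, I would model the inversion as a stochastic channel whose randomness (the uniforms $u_i$ of Eq.~\eqref{eq:sampling_formula}, the ChaCha20 keystream under $K_{sess}^{(B)}$, and the drift) comes only from session $B$. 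The secrets $m^{(A)}$ and $m^{(B)}$ are drawn independently as high-entropy server-side values. Sessions $A\ne B$ also use independent sampling randomness. Hence the pair $(\text{session-}A\text{ material},\ \text{session-}B\text{ material})$ is a product of independent variables. Any function of the first is then independent of any function of the second. This gives independence of $H_{ideal}[\phi(i)]$ and $\hat{B}_a[\mathcal{I}_{bind}(i)]$. For the use in Theorem~\ref{thm:security}, I would also record the following. Under the random oracle model for SHA-256, $H_{ideal}$ is uniform on $\{0,1\}^{256}$ independently of everything in session $B$. Therefore each match indicator is Bernoulli$(1/2)$, and the indicators are mutually independent across $i$, provided $\phi$ is injective.

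I expect the main obstacle to be the coupling through the joint inversion. Joint inversion processes the spliced pair $(\mathbf{x}_v^{(A)},\mathbf{x}_a^{(B)})$ together, and cross-attention means the recovered audio latent may depend on the video track. So $\hat{B}_a$ is not literally a function of session $B$ alone. My first attempt would be to condition on $\mathbf{x}_v^{(A)}$. Then $\hat{B}_a$ is a function of session-$B$ randomness together with the fixed conditioning value, while $H_{ideal}$ depends on session $A$'s secret $m^{(A)}$ only through the SHA-256 random oracle. Treating SHA-256 as a random oracle queried on the secret input $B_v^{ideal,(A)}$, its output is uniform and independent of the observable track $\mathbf{x}_v^{(A)}$. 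Hence it is independent of $\hat{B}_a$ given $\mathbf{x}_v^{(A)}$. The justification is that the adversary and the pipeline never query the oracle on that secret input, since it is keyed by $m^{(A)}$. Integrating out the conditioning then yields the unconditional independence, up to a negligible term in the random-oracle/ChaCha20 security parameter.
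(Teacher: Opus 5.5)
Your first two paragraphs follow the paper's own route. There $H_{ideal}$ is a function of session-$A$ secrets, $\hat{B}_a$ is attributed to session $B$, and independence follows from key separation plus the random-oracle model for SHA-256. The gap is in the step you add for the obstacle you rightly flag, the coupling through joint inversion. Two claims there fail.

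\textbf{The oracle is queried at that point.} It is not true that the pipeline never queries the oracle at $B_v^{ideal,(A)}$. Session $A$'s own generation computes $h_v^{(A)}=\text{SHA-256}(B_v^{(A)})=H_{ideal}$. It writes this, masked only by session $A$'s audio keystream $\mathcal{K}^{(A)}_a$, into the binding coordinates of $\mathbf{z}_a^{(A)}$. Since the latents are denoised jointly with bidirectional cross-attention, $\mathbf{x}_v^{(A)}$ depends on $H_{ideal}\oplus\mathcal{K}^{(A)}_a$. So your claim $H_{ideal}\perp\mathbf{x}_v^{(A)}$ can at best be rescued by that one-time-pad masking, not by the oracle never being queried.

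\textbf{The detector uses session-$A$ keys.} More seriously, the detector cannot use $K_{sess}^{(B)}$, as your second paragraph assumes. It re-derives $K_{sess}$ from $\hat{I}=I^{(A)}$. Hence both the positions $\mathcal{I}_{bind}$ and the keystream that de-randomizes $\hat{B}_a$ are session-$A$ objects, and $\hat{B}_a$ is a function of $(\mathbf{x}_v^{(A)},\mathbf{x}_a^{(B)},\mathcal{K}^{(A)})$. Given $\mathbf{x}_v^{(A)}$, the variables $H_{ideal}$ and $\mathcal{K}^{(A)}_a$ are in general no longer independent. There is a leak path $H_{ideal}\to\mathbf{z}_a^{(A)}\to\mathbf{x}_v^{(A)}\to\tilde{\mathbf{z}}_a\to\hat{B}_a$ in which the detector itself strips the mask. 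Neither key separation nor the random oracle rules out a correlation along it. So the claimed conditional independence given $\mathbf{x}_v^{(A)}$ does not follow, and the ``integrate out the conditioning'' step has nothing to stand on.

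\textbf{How the paper handles it, and how to repair your proof.} The paper does not resolve this either. It only asserts that the inversion noise and decoding ``preserve'' the independence between $H_{ideal}$ and the session-$B$ audio grid. In effect it assumes that the recovered pre-decryption audio bits are a channel output driven by $\mathbf{x}_a^{(B)}$ and by noise independent of session $A$. To make your proof sound, state this no-cross-modal-leakage condition as an explicit modeling assumption, or carry the leakage as an additive error term, rather than deriving it from the random oracle. Under that assumption the argument is short. $H_{ideal}$ is the oracle value at a point distinct from every point used to derive $\mathcal{K}^{(A)}$, $\mathcal{I}_{bind}$ and all session-$B$ material, so it is uniform and independent of $\hat{B}_a$. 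Your injectivity condition on $\phi$ then gives mutual independence of the match indicators.

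\textbf{A smaller correction.} The event $\hat{I}\neq I^{(A)}$ cannot be ``treated the same way'' when $\hat{I}=I^{(B)}$. In that case $H_{ideal}=h_v^{(B)}$, which is exactly what the audio carries. That event must go into the negligible error term, so what you actually obtain is independence up to a negligible term, not exact independence.
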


\begin{proof}
The target hash is $H_{ideal} = \text{SHA-256}(B_v^{ideal})$, where $B_v^{ideal}$ is deterministically regenerated from the session-$A$ secret $m_A$ via key derivation. The recovered audio bits $\hat{B}_a$ originate from session $B$, which uses a different secret $m_B$ and hence a different session key $K_{sess}^{(B)}$. Furthermore, by \cref{thm:lossless}, the watermark bits embedded in Session $B$ are computationally indistinguishable from uniform random bits, ensuring the marginal distribution of $\hat{B}_a$ is uniform.

Under the \textit{Random Oracle Model} (ROM), SHA-256 is modeled as a random function $\mathcal{H}: \{0,1\}^* \to \{0,1\}^{256}$. Since $B_v^{ideal}$ and the session-$B$ audio grid are generated from independent keys, the hash output $H_{ideal}$ is statistically independent of the session-$B$ audio grid $B_a^{(B)}$. Furthermore, the inversion noise and decoding process introduce additional randomness that preserves this independence.

More precisely, each output bit of SHA-256 satisfies $P(H_{ideal}[\phi(i)] = b) = \frac{1}{2}$ for $b \in \{0,1\}$, and the audio bits from session $B$ at the binding positions satisfy $P(\hat{B}_a[\mathcal{I}_{bind}(i)] = b') = \frac{1}{2}$ marginally (since the session-$B$ bits are encrypted under an independent key). The cross-session independence follows from the key separation: $K_{sess}^{(A)} \perp K_{sess}^{(B)}$.
\end{proof}

\subsubsection{Step 2: Hoeffding Bound for Passive Adversaries}

Under the independence established in Lemma~\ref{lem:independence}, each bit-match indicator:
\begin{equation}
    X_i = \mathbb{1}[\hat{B}_a[\mathcal{I}_{bind}(i)] = H_{ideal}[\phi(i)]]
\end{equation}
is a Bernoulli random variable with $P(X_i = 1) = \frac{1}{2}$, and the $\{X_i\}_{i=1}^N$ are mutually independent. The binding score $S_{bind} = \frac{1}{N}\sum_{i=1}^{N}X_i$ has expectation $\mathbb{E}[S_{bind}] = \frac{1}{2}$.

By \textbf{Hoeffding's Inequality}~\cite{hoeffding1963probability}, for any threshold $\tau_{bind} > \frac{1}{2}$:
\begin{equation}
    P(S_{bind} \ge \tau_{bind} \mid H_1) \le \exp\!\left(-2N(\tau_{bind} - \tfrac{1}{2})^2\right).
    \label{eq:app_hoeffding}
\end{equation}

\begin{table}[h]
    \centering
    \caption{\textbf{Evasion Probability vs.\@ Binding Hash Length.} Upper bounds on $P_{fp}$ from Eq.~\eqref{eq:app_hoeffding} for various $N$ with $\tau_{bind} = 0.8$.}
    \label{tab:app_evasion}
    \begin{tabular}{l|ccccc}
        \toprule
        $N$ & 16 & 32 & 64 & 128 & 256 \\
        \midrule
        $P_{fp}$ upper bound & $5.7 \times 10^{-2}$ & $3.2 \times 10^{-3}$ & $1.0 \times 10^{-5}$ & $9.9 \times 10^{-11}$ & $9.7 \times 10^{-21}$ \\
        \bottomrule
    \end{tabular}
\end{table}

For our default configuration $N = 128$ and $\tau_{bind} = 0.8$:
\begin{equation}
    P_{fp} \le \exp\!\left(-2 \times 128 \times (0.8 - 0.5)^2\right) = \exp(-23.04) < 9.86 \times 10^{-11}.
\end{equation}

\subsubsection{Step 3: Extension to Adaptive (White-Box) Adversaries}

A stronger adversary $\mathcal{A}_{adapt}$ has full white-box access to the generation model weights $\theta$ and can perform gradient-based optimization. We show that this advantage does not help bypass the binding check.

\begin{lemma}[Encrypted Optimization Objective]
\label{lem:encrypted_obj}
An adaptive adversary with access to model weights $\theta$ but not the server-side secret $m$ cannot compute $S_{bind}$ or its gradient $\nabla S_{bind}$.
\end{lemma}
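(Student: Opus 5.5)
The plan is to show that $S_{bind}$, viewed as a function of what the adversary controls (the query $\tilde{\mathbf{x}}$, hence the inverted latents $(\tilde{\mathbf{z}}_v,\tilde{\mathbf{z}}_a)$ and recovered bits $\hat{B}_a$), depends on two quantities the adversary cannot compute: the binding positions $\mathcal{I}_{bind}$ and the target hash $H_{ideal}$. Both are deterministic functions of $K_{sess}$. We then reduce any efficient procedure that evaluates $S_{bind}$ (or $\nabla S_{bind}$) to an efficient procedure that recovers information about $K_{sess}$, and hence about $m$.

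\textbf{Step 1: trace the dependency chain.} By Eq.~\eqref{eq: key}, $K_{sess}=\text{SHA-256}(\text{Prefix}(\text{SHA-256}(m))\parallel\text{SHA-256}(E_P))$. The grid $B_v^{ideal}$ is built from $\text{HMAC}(K_{sess},\text{``video''})$, and $H_{ideal}=\text{SHA-256}(B_v^{ideal})$. The positions $\mathcal{I}_{bind}$ and the keystream $\mathcal{K}$ used to map latent signs to grid bits are also keyed by $K_{sess}$. The weights $\theta$ enter only through the inversion map $\tilde{\mathbf{x}}\mapsto\tilde{\mathbf{z}}_0$. So knowing $\theta$ gives the adversary the latents, but not the keyed map from latents to $S_{bind}$.

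\textbf{Step 2: the value $S_{bind}$ is hidden.} In the random oracle model, treat SHA-256 and HMAC as random functions. Without $m$, which is high-entropy and never leaves the server, the adversary has not queried the oracle at the point defining $K_{sess}$, except with probability $q/2^{L_m}$ after $q$ queries. Hence $K_{sess}$ is uniform from the adversary's view. The derived quantities $\mathcal{K}$, $\mathcal{I}_{bind}$ and $H_{ideal}$ are then pseudorandom, using ChaCha20 security as in \cref{thm:lossless}. Consequently, for any candidate $\tilde{\mathbf{x}}$ the adversary's best prediction of each match indicator $X_i$ is a fair coin. Any algorithm that output $S_{bind}(\tilde{\mathbf{x}})$ with non-negligible advantage would distinguish ChaCha20 or HMAC output from random, or would recover a preimage involving $m$. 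This contradicts the stated security assumptions.

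\textbf{Step 3: the gradient.} $S_{bind}$ is a sum of indicators of signs of $\tilde{\mathbf{z}}_a$ coordinates, so its true gradient is zero almost everywhere. The meaningful object is a surrogate such as $\sum_i (2H_{ideal}[\phi(i)]-1)\,\sigma(\tilde{\mathbf{z}}_a[\mathcal{I}_{bind}(i)])$, with the keystream sign applied. Differentiating through the ODE inversion via $\theta$ yields $\partial\tilde{\mathbf{z}}_a/\partial\tilde{\mathbf{x}}$. The weighting vector, however, is the keyed sign pattern from Step 2, which is pseudorandom. Therefore the adversary's gradient estimate is, up to negligible advantage, a gradient toward a uniformly random target. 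We conclude that white-box access collapses to blind search over $2^N$ patterns, which is exactly the setting of the Hoeffding bound \eqref{eq:app_hoeffding}.

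\textbf{Main obstacle.} The detection API is an oracle whose accept/reject answers leak information about $S_{bind}$. The lemma as stated concerns computing $S_{bind}$ without the API, and that part is clean. But a full argument must bound the leakage from adaptive API queries. I would handle this by treating each query as a one-bit threshold answer. A union bound then gives overall evasion probability at most $q\exp(-2N(\tau_{bind}-\tfrac12)^2)$, and I would state the lemma's scope accordingly.
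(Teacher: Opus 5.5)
Your route is the paper's. You trace $S_{bind}$ to $\mathcal{I}_{bind}$ and $H_{ideal}$, hence to $K_{sess}$ and $m$, and conclude that optimization degenerates to blind search, where the Hoeffding bound applies. Your additions go beyond the paper: the random-oracle accounting, the remark that $\nabla S_{bind}$ is zero almost everywhere so only a surrogate is meaningful, and the detection-API leakage. But the step that carries the weight is false as stated. In Step~2 you claim that for \emph{any} candidate $\tilde{\mathbf{x}}$ the adversary's best prediction of $X_i$ is a fair coin. In Step~3 you claim the surrogate's weighting vector is effectively a uniformly random target.

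In a Swap Attack the adversary's view contains the authentic session-$A$ pair, not just $\theta$. Embedding uses $M_{rand}=B^{diff}\oplus\mathcal{K}$, and decoding XORs the same keystream back. So for any candidate audio, the decoded bit at a binding copy equals $H_{ideal}[\phi(i)]$ exactly when the candidate's inverted sign agrees there with the authentic pattern $M_{rand}^{(A)}$ (up to majority voting over copies). With white-box $\theta$, the adversary obtains $\mathrm{sign}(\tilde{\mathbf{z}}_a^{(A)})\approx M_{rand}^{(A)}$ simply by inverting $\mathbf{x}_a^{(A)}$. The key-free objective ``agree with $\mathrm{sign}(\tilde{\mathbf{z}}_a^{(A)})$ on all audio coordinates'' is therefore computable and differentiable through the inversion. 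It drives $S_{bind}$ up without any knowledge of $\mathcal{I}_{bind}$, $H_{ideal}$ or $\mathcal{K}$. Your weighting vector, the keyed sign pattern on $\mathcal{I}_{bind}$, is exactly a sub-vector of this observed pattern: pseudorandom, but not unknown.

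For the same reason your reduction yields no contradiction. A ChaCha20 distinguisher simulating the adversary must also generate the session-$A$ output from its challenge string. The imprint-based prediction of $X_i$ is equally accurate whether that string is real keystream or truly random. Your union bound $q\exp(-2N(\tau_{bind}-\tfrac12)^2)$ over API queries fails the same way. The per-query bound holds only for candidates independent of $M_{rand}^{(A)}$, and an adversary that starts from the accepted authentic pair and walks toward a target audio while staying accepted violates it. The paper's proof has the identical hole: it asserts that evaluating any value of $S_{bind}$ requires guessing $H_{ideal}$.

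What your argument actually establishes is narrower. Consider audio whose inverted latent is computationally independent of session $A$'s embedding, e.g.\ unmodified audio from a session $B\neq A$, as in the formal Swap Attack definition. For such audio each $X_i$ is unpredictable and the Hoeffding bound applies, so no efficient algorithm computes $S_{bind}$ on all inputs. Extending this to adaptive adversaries requires either restricting the admissible audio tracks or handling imprint/replay strategies explicitly. (Minor: the ROM guessing probability should be $q/2^{\min(L_m,|\mathrm{Prefix}|)}$. Only $\mathrm{Prefix}(\text{SHA-256}(m))$ is secret in $K_{sess}$, and the prompt is known to the adversary.)
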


\begin{proof}
The binding score $S_{bind}$ (Eq.~\eqref{eq:app_binding_score}) depends on two quantities:
\begin{enumerate}
    \item The binding index set $\mathcal{I}_{bind}$, which is derived from $K_{sess}$ via ChaCha20 stream generation.
    \item The target hash $H_{ideal} = \text{SHA-256}(B_v^{ideal})$, where $B_v^{ideal}$ is reconstructed from $m$ via key derivation.
\end{enumerate}
Both $\mathcal{I}_{bind}$ and $H_{ideal}$ are deterministic functions of $K_{sess}$, which in turn depends on $m = K_{priv}$ (the server-side secret). Since $m$ is never embedded in the latent and is accessible only through the server-side database (queried by the plaintext index $I$), the adversary cannot evaluate $S_{bind}$ for any candidate audio track.

Formally, the adversary's optimization problem is:
\begin{equation}
    \max_{\mathbf{x}_a^{(B)}} S_{bind}(\mathbf{x}_v^{(A)}, \mathbf{x}_a^{(B)}),
\end{equation}
but the objective function is a composition involving the unknown $H_{ideal}$ and $\mathcal{I}_{bind}$. Without knowledge of $m$, evaluating even a single function value requires guessing the correct $H_{ideal}$, which amounts to predicting the output of a cryptographic hash function without the input seed (or breaking the pseudorandomness of the derived key), computationally infeasible under standard assumptions.

Consequently, any gradient-based or optimization-based attack degenerates to a \textit{blind brute-force search} over the $2^N$ possible bit configurations at the binding positions. For each random guess, the match probability is exactly $\frac{1}{2}$ per bit by Lemma~\ref{lem:independence}, and the Hoeffding bound (Eq.~\eqref{eq:app_hoeffding}) applies identically.
\end{proof}

\subsubsection{Combined Security Statement}

Combining the above results, we conclude:

\textbf{Theorem~2 (Binding Security, restated).} \textit{For a binding sequence of length $N = |\mathcal{I}_{bind}|$ and detection threshold $\tau_{bind} > 0.5$, the probability that any polynomial-time adversary (passive or adaptive with full model access) produces a swapped pair passing the binding check satisfies:}
\begin{equation}
    P(\mathcal{A}_{swap} \text{ succeeds}) \le \exp\!\left(-2N(\tau_{bind} - \tfrac{1}{2})^2\right). \tag*{$\blacksquare$}
\end{equation}
\textit{The security guarantee is exponential in the binding length $N$ and holds under the Random Oracle Model for SHA-256 and the computational security assumption of ChaCha20.}

\section{Inversion Dynamics and Solver Analysis}
\label{app:solver}
\setcounter{equation}{0}

\subsection{ODE Solver Discretization and Truncation Drift}
The forward ODE for Rectified Flow~\cite{liu2022flow} models define the velocity field:
\begin{equation}
    \frac{d\mathbf{z}_t}{dt} = v_\theta(\mathbf{z}_t, t, c),
\end{equation}
where $\mathbf{z}_0 \sim \mathcal{N}(0, I)$ is the initial noise and $\mathbf{z}_1$ is the generated sample. For inversion, we solve the ODE backwards from $t=1$. We use the Euler method~\cite{karras2022elucidating} with a uniform time schedule $\{t_N = 1, t_{N-1}, \dots, t_1 = \epsilon\}$ where $N$ is the number of inversion steps:
\begin{equation}
    \mathbf{z}_{t_{i-1}} = \mathbf{z}_{t_i} + (t_{i-1} - t_i) \cdot v_\theta(\mathbf{z}_{t_i}, t_i, c).
\end{equation}

\textbf{The $\delta_t$ Truncation Gap.}
In theoretical implementations, integrating exactly to $t=0$ yields perfect inversion. However, to ensure generation quality and avoid the numerical singularity near pure noise ($t=0$) where the velocity field $v_\theta$ may become unbounded, state-of-the-art joint audio-visual generation models (MOVA and LTX-2) enforce a strict minimum timestep $t=\delta_t > 0$ in their sampling pipelines. Consequently, the inversion process must be truncated to maintain stability. This discretization gap introduces a deterministic drift, preventing the raw Bit Accuracy from reaching a perfect $1.0$.

\subsection{Inversion Fidelity: DDIM vs. Rectified Flow}
Prior works utilizing DDIM~\cite{song2021ddim} solve a related ODE in the noise-prediction parameterization:
\begin{equation}
    \mathbf{z}_{t_{i-1}} = \sqrt{\alpha_{t_{i-1}}} \left( \frac{\mathbf{z}_{t_i} - \sqrt{1-\alpha_{t_i}} \cdot \epsilon_\theta(\mathbf{z}_{t_i}, t_i)}{\sqrt{\alpha_{t_i}}} \right) + \sqrt{1-\alpha_{t_{i-1}}} \cdot \epsilon_\theta(\mathbf{z}_{t_i}, t_i).
\end{equation}

To systematically understand the impact of the solver paradigm versus the aforementioned $\delta_t$ truncation, we compare inversion fidelity across three distinct experimental settings:

\begin{itemize}
    \item \textbf{DDIM-based Inversion (VideoShield~\cite{hu2025videoshield}):} Evaluated on ModelScope~\cite{wang2023modelscope} (T2V) and Stable-Video-Diffusion~\cite{blattmann2023stable} (I2V), integrating fully to the noise boundary.
    \item \textbf{Ideal Rectified Flow Inversion:} Evaluated using the official Rectified Flow codebase on CIFAR-10~\cite{krizhevsky2009learning}, utilizing an Euler ODE solver that integrates fully to $t=0$.
    \item \textbf{mAVE (Ours) on MOVA-720p:} Operating under the enforced $\delta_t$ truncation inherently required by native joint generation pipelines.
\end{itemize}

\begin{table}[h]
    \centering
    \caption{\textbf{Inversion Accuracy vs.\@ Number of Steps across Frameworks.} Both DDIM and ideal Rectified Flow achieve near-perfect inversion ($\sim 1.000$) given sufficient steps. In our mAVE pipeline, performance is marginally bottlenecked by the necessary $\delta_t$ truncation rather than the underlying solver paradigm, yet it remarkably maintains high robustness even down to $5$ steps.}
    \label{tab:inversion_comparison}
    \begin{tabular}{l|c|ccccc}
        \toprule
        \multirow{2}{*}{\textbf{Framework / Model}} & \multirow{2}{*}{\textbf{Task}} & \multicolumn{5}{c}{\textbf{Inversion Steps ($N$)}} \\
        & & 50 & 25 & 10 & 5 & 1 \\
        \midrule
        \multicolumn{7}{l}{\textit{Baseline Paradigms (Full Integration to $t=0$)}} \\
        \midrule
        DDIM (ModelScope) & T2V & 1.000 & 1.000 & 1.000 & - & - \\
        DDIM (SVD) & I2V & 0.999 & 0.999 & 0.999 & - & - \\
        Rectified Flow (CIFAR-10) & Uncond & 1.000 & 1.000 & 1.000 & - & - \\
        \midrule
        \multicolumn{7}{l}{\textit{mAVE with $\delta_t$ Truncation (MOVA-720p)}} \\
        \midrule
        mAVE Video BA & \multirow{2}{*}{TI2AV} & 0.967 & 0.961 & 0.955 & 0.958 & 0.857 \\
        mAVE Audio BA & & 0.913 & 0.892 & 0.901 & 0.865 & 0.780 \\
        \bottomrule
    \end{tabular}
\end{table}

As demonstrated in Table~\ref{tab:inversion_comparison}, both DDIM and Flow Matching naturally provide excellent inversion guarantees in practice when integrated fully. For mAVE, the slight performance gap is fundamentally a product of the $\delta_t$ truncation mechanism, rather than an inherent weakness of the Euler solver. 

Crucially, despite this truncation drift, the straight-line transport property of Rectified Flow allows mAVE to remain highly resilient under extremely constrained computational budgets. The accuracy is solidly maintained when scaled down to just $5$ inversion steps, with significant degradation only occurring at an extreme $1$-step inversion. This property is highly advantageous for efficient, real-world deployment.

\section{Decoding Method Comparison: Median vs.\@ Zero Thresholding}
\label{app:decoding}
\setcounter{equation}{0}

After inversion, the recovered latent $\tilde{\mathbf{z}}_0$ exhibits minor drift due to the discretization gap. We investigated two decoding strategies:

\textbf{Zero-Thresholding (Standard).} The simplest approach: each bit is decoded as:
\begin{equation}
    \tilde{b}[i] = \mathbb{1}[\tilde{\mathbf{z}}_0[i] > 0].
\end{equation}
This leverages the fundamental asymmetry of the half-Gaussian constraint: watermarked coordinates are guaranteed to have the correct sign in the noise-free case, and the redundancy from block-wise repetition absorbs moderate sign flips.

\textbf{Median Thresholding.} This approach estimates a per-block threshold by computing the median of all $k_{all}$ copies of each bit, then decodes relative to this median:
\begin{equation}
    \mu_j = \text{median}\{\tilde{\mathbf{z}}_0[i] : i \in \text{Block}_j\}, \quad \tilde{b}[j] = \mathbb{1}[\mu_j > 0].
\end{equation}
The motivation is to correct for a potential global bias in the recovered latents. However, since Rectified Flow inversion introduces very little systematic bias (the drift is primarily random noise), the median and zero thresholds produce nearly identical results.

\begin{table}[h]
    \centering
    \caption{\textbf{Decoding Strategy Comparison.} Bit Accuracy on LTX-2 T2AV with 25 inversion steps. The negligible difference confirms that zero-thresholding is sufficient.}
    \label{tab:decoding_comparison}
    \begin{tabular}{l|cc}
        \toprule
        \textbf{Decoding Method} & \textbf{Video BA} & \textbf{Audio BA} \\
        \midrule
        Zero-Thresholding & 0.936 & 0.915 \\
        Median Thresholding & 0.940 & 0.913 \\
        \bottomrule
    \end{tabular}
\end{table}

As shown in Table~\ref{tab:decoding_comparison}, the performance difference is marginal and inconsistent across modalities. We therefore adopt the simpler Zero-Thresholding in all experiments to avoid introducing an unnecessary hyperparameter.

\section{The Performance Analysis of Joint Inversion}
\label{app:joint_vs_separate}
\setcounter{equation}{0}

A key architectural advantage of mAVE is that both embedding and extraction operate through the \textit{same} joint denoising / inversion pass used by the underlying audio-visual model. In this section we empirically verify that this single joint inversion recovers watermark bits as accurately as performing two dedicated, modality-specific inversions, and we discuss the resulting efficiency gain.

\subsection{Setup}
We compare two extraction protocols on LTX-2 T2AV (25 inversion steps, zero-thresholding):
\begin{itemize}
    \item \textbf{Joint Inversion (default).} A single backward ODE pass is performed through the bimodal model, simultaneously recovering both the video latent $\tilde{\mathbf{z}}_v$ and the audio latent $\tilde{\mathbf{z}}_a$.
    \item \textbf{Separate Inversion.} Two independent backward ODE passes are performed: one using only the video branch to recover $\tilde{\mathbf{z}}_v$, and one using only the audio branch to recover $\tilde{\mathbf{z}}_a$. Each pass uses the same 25 inversion steps and solver configuration.
\end{itemize}

\subsection{Results}

\begin{table}[h]
    \centering
    \caption{\textbf{Joint vs.\ Separate Inversion (LTX-2 T2AV).} Bit Accuracy and wall-clock time comparison. Joint inversion matches the extraction quality of two separate passes while halving the computational cost.}
    \label{tab:joint_vs_separate}
    \begin{tabular}{l|cc|c}
        \toprule
        \textbf{Extraction Protocol} & \textbf{Video BA} & \textbf{Audio BA} & \textbf{Time (s/item)} \\
        \midrule
        Joint Inversion (ours) & 0.936 & 0.915 & 2.07 \\
        Separate Inversion & 0.936 & 0.915 & 4.16 \\
        \bottomrule
    \end{tabular}
\end{table}

As shown in Table~\ref{tab:joint_vs_separate}, the two protocols yield virtually identical Bit Accuracy for both modalities, confirming that the joint denoising trajectory preserves per-modality fidelity during inversion. This is expected: in Rectified-Flow-based joint models, the audio and video latents share the same ODE velocity field but occupy orthogonal sub-spaces of the joint latent, so inverting them together introduces no additional cross-modal interference compared with inverting them separately.

\subsection{Efficiency Implications}
The practical consequence is a significant reduction in detection cost relative to baseline combinations. Table~\ref{tab:efficiency_comparison} summarises the pipeline-level comparison.

\begin{table}[h]
    \centering
    \caption{\textbf{Extraction Pipeline Comparison.} mAVE requires only one joint inversion pass (identical in cost to VideoShield's video-only inversion), whereas baseline combinations must run an additional dedicated audio extractor.}
    \label{tab:efficiency_comparison}
    \resizebox{\textwidth}{!}{%
    \begin{tabular}{l|ccc|c}
        \toprule
        \textbf{Method} & \textbf{Video Extraction} & \textbf{Audio Extraction} & \textbf{Binding Check} & \textbf{Total Passes} \\
        \midrule
        VideoShield + AudioSeal & ODE inversion ($N$ steps) & AudioSeal encoder (forward) & --- & 2 \\
        VideoShield + WavMark & ODE inversion ($N$ steps) & WavMark encoder (forward) & --- & 2 \\
        \textbf{mAVE (ours)} & \multicolumn{2}{c|}{Joint ODE inversion ($N$ steps)} & $\tau_{bind}$ & \textbf{1} \\
        \bottomrule
    \end{tabular}%
    }
\end{table}

Because the targeted joint models (LTX-2, MOVA) denoise audio and video in a \textit{single} forward pass, the backward ODE inversion that recovers the shared latent $\tilde{\mathbf{z}}_0$ simultaneously yields both the video bit grid $\hat{B}_v$ and the audio bit grid $\hat{B}_a$---no auxiliary neural-network encoder (as required by AudioSeal~\cite{saharia2022photorealistic}) or additional inversion pass is needed. Consequently:
\begin{itemize}
    \item The \textbf{embedding cost} of mAVE is identical to VideoShield~\cite{hu2025videoshield}: both simply replace the initial Gaussian noise with a structured sample before the joint forward pass.
    \item The \textbf{extraction cost} of mAVE is also identical to VideoShield's single ODE inversion, while baseline combinations (VideoShield + AudioSeal / WavMark) must additionally execute a dedicated audio encoder, effectively doubling the extraction workload.
\end{itemize}

In summary, by natively leveraging the joint architecture, mAVE eliminates the redundant audio processing stage entirely, achieving the same watermark fidelity at roughly half the total extraction time compared with any ``video watermark + audio watermark'' combination.

\section{Complete Cryptographic Key Derivation}
\label{app:key_derivation}
\setcounter{equation}{0}

This section provides the end-to-end derivation of the session-specific randomization, from payload to continuous latent initialization.

\subsection{Payload Structure: Index and Secret}
The overall watermark payload comprises two functionally distinct segments:
\begin{itemize}
    \item \textbf{Plaintext Index} $I \in \{0,1\}^{L_I}$ ($L_I = 32$ bits) \textit{(embedded)}: A non-confidential session identifier embedded \textit{without encryption} in a dedicated region of the video grid. It is protected only by repetition coding and can be extracted via majority voting without any key.
    \item \textbf{Secret Payload} $m \in \{0,1\}^{L_m}$ \textit{(non-embedded)}: A high-entropy secret stored exclusively on the server-side database, indexed by $I$. The payload $m$ is \textit{never} embedded in the latent; it participates only in key derivation.
\end{itemize}

\subsection{Session Key Computation}
Given the secret payload $m$ and the generation prompt $P$, the session key is computed as:
\begin{equation}
    K_{sess} = \text{SHA-256}\bigl(\text{Prefix}(\text{SHA-256}(m)) \;\|\; \text{SHA-256}(\text{Norm}(P))\bigr),
\end{equation}
where $\text{Prefix}(\cdot)$ extracts the first $N$ bits of the hash, $\|$ denotes concatenation, and $\text{Norm}(\cdot)$ applies lowercase conversion and whitespace trimming to the prompt string. The purpose of incorporating the prompt is to make the watermark \textit{prompt-dependent}: the same payload embedded under different prompts yields entirely different manifold constraints, preventing cross-prompt copy attacks.

At detection time, the verifier first extracts $\hat{I}$ from the video grid via majority voting, queries the database to retrieve both $m$ and the text prompt $P$ that was recorded during generation, and then derives $K_{sess}$ using the formula above. Because $m$ is never embedded and the prompt $P$ is required for key derivation, an attacker who only has model access cannot derive $K_{sess}$ even after a successful inversion.

\subsection{Shared Seed Extraction}
To ensure correlated base entropy across the audio and video modalities, we derive a shared seed from the secret payload $m$:
\begin{equation}
    \text{seed}_{shared} = \text{SHA-256}(m)[:8],
\end{equation}
\ie, the first 8 bytes (64 bits) of the hash of the secret payload $m$. This shared seed initializes both the audio and video pseudo-random streams, ensuring that their bit grids are structurally aligned for the binding verification.

\subsection{ChaCha20 Stream Generation}
The ChaCha20 stream cipher~\cite{bernstein2008chacha} is initialized with:
\begin{itemize}
    \item \textbf{Key:} $K_{sess}$ (256-bit).
    \item \textbf{Nonce:} A fixed zero nonce (96-bit all-zeros). Since $K_{sess}$ is unique per session, nonce reuse is not a security concern.
    \item \textbf{Counter:} Starting from 0.
\end{itemize}
The cipher generates a pseudo-random byte stream $\mathcal{S} = \{s_1, s_2, \dots\}$. The first portion of $\mathcal{S}$ determines the coordinate permutation $\pi$ (which latent dimensions carry watermark bits), and the subsequent portion determines the bit assignment $y_i$ for each coordinate.

\subsection{Truncated Inverse Sampling}
For each watermark coordinate $i \in \mathcal{I}_w$ with assigned bit $y_i$, we sample from the corresponding half-Gaussian via inverse transform sampling:
\begin{equation}
    z_i = \Phi^{-1}\!\left(\frac{1 + y_i}{2} \cdot u_i + \frac{1 - y_i}{2} \cdot (u_i - 1)\right), \quad u_i \sim \text{Uniform}(0.5, 1),
\end{equation}
where $\Phi^{-1}$ is the quantile function (inverse CDF) of $\mathcal{N}(0,1)$. When $y_i = 1$, the uniform input lies in $[0.5,\,1)$ so the output ranges in $[\Phi^{-1}(0.5), \Phi^{-1}(1)) = [0, +\infty)$, producing a positive half-Gaussian. When $y_i = 0$, the uniform input lies in $(0,\,0.5]$ so the output ranges in $[\Phi^{-1}(0), \Phi^{-1}(0.5)] = (-\infty, 0]$, producing a negative half-Gaussian.

\textbf{Outlier Truncation.} To prevent extreme values that could degrade generation quality, we further clamp the uniform input to $u_i \in [0.5 + \delta, 1 - \delta]$ with $\delta = 0.001$, which restricts $|z_i|$ to approximately $[\Phi^{-1}(0.501), \Phi^{-1}(0.999)] \approx [0.003, 3.09]$. This truncation has negligible effect on the decoding margin while preventing numerical overflow in the generation backbone.

Non-watermark coordinates $j \notin \mathcal{I}_w$ are sampled from the standard $\mathcal{N}(0,1)$ as usual.

\section{Formal Metric Definitions}
\label{app:metrics}
\setcounter{equation}{0}

\subsection{Bit Decision Rule (Majority Voting)}
The block-wise repetition scheme embeds each payload bit $m_j$ into $k_{all}$ copies across the latent tensor. After inversion, the recovered copies are $\{\tilde{b}_{j,1}, \tilde{b}_{j,2}, \dots, \tilde{b}_{j,k_{all}}\}$ where $\tilde{b}_{j,l} = \mathbb{1}[\tilde{\mathbf{z}}_0[\pi(j,l)] > 0]$. The final decision for bit $j$ is:
\begin{equation}
    \hat{m}_j = \mathbb{1}\!\left[\sum_{l=1}^{k_{all}} \tilde{b}_{j,l} > \frac{k_{all}}{2}\right].
\end{equation}
That is, $\hat{m}_j = 1$ if the majority of the $k_{all}$ copies decode to 1, and $\hat{m}_j = 0$ otherwise.

\subsection{Bit Accuracy}
Given the ground-truth payload $m \in \{0,1\}^L$ and the recovered payload $\hat{m} \in \{0,1\}^L$ (after majority voting), the Bit Accuracy is defined as:
\begin{equation}
    \text{BA} = \frac{1}{L} \sum_{j=1}^{L} \mathbb{1}[\hat{m}_j = m_j].
\end{equation}
Bit Accuracy of 1.0 indicates perfect recovery, while 0.5 corresponds to random guessing. In our experiments, we report BA separately for the video modality ($\text{BA}_v$) and audio modality ($\text{BA}_a$).

\subsection{Video Quality Metrics}
We adopt the five video quality metrics from VBench~\cite{huang2024vbench}. All scores are normalized to $[0,1]$; higher is better. Further details are available in the original VBench paper.

\textbf{Subject Consistency.}
Subject Consistency measures the temporal coherence of the main subject across frames by computing the DINO~\cite{caron2021emerging} feature similarity:
\begin{equation}
    S_{\text{subject}} = \frac{1}{T-1} \sum_{t=2}^{T} \frac{1}{2} \bigl( \langle d_1 \cdot d_t \rangle + \langle d_{t-1} \cdot d_t \rangle \bigr),
\end{equation}
where $d_i$ is the DINO image feature of the $i$-th frame (normalized to unit length) and $\langle \cdot \rangle$ denotes the dot product (cosine similarity).

\textbf{Background Consistency.}
Background Consistency evaluates the temporal consistency of background scenes by computing CLIP~\cite{radford2021learning} feature similarity across frames:
\begin{equation}
    S_{\text{background}} = \frac{1}{T-1} \sum_{t=2}^{T} \frac{1}{2} \bigl( \langle c_1 \cdot c_t \rangle + \langle c_{t-1} \cdot c_t \rangle \bigr),
\end{equation}
where $c_i$ represents the CLIP image feature of the $i$-th frame, normalized to unit length.

\textbf{Motion Smoothness.}
Motion Smoothness is evaluated by the frame-by-frame motion prior to video frame interpolation models~\cite{li2023amt}. Specifically, given a generated video consisting of frames $[f_0, f_1, f_2, f_3, f_4, \ldots, f_{2n-2}, f_{2n-1}, f_{2n}]$, the odd-numbered frames are manually dropped to obtain a lower-frame-rate video $[f_0, f_2, f_4, \ldots, f_{2n-2}, f_{2n}]$, and video frame interpolation is used to infer the dropped frames $[\hat{f}_1, \hat{f}_3, \ldots, \hat{f}_{2n-1}]$. The Mean Absolute Error (MAE) between the reconstructed frames and the original dropped frames is then computed and normalized into $[0,1]$, with a larger number implying smoother motion.

\textbf{Dynamic Degree.}
Dynamic Degree assesses the extent to which models tend to generate non-static videos. RAFT~\cite{teed2020raft} is used to estimate optical flow strengths between consecutive frames of a generated video. The average of the largest 5\% optical flows (capturing the movement of salient objects) is taken as the basis to determine whether the video is static. The final dynamic degree score is calculated by measuring the proportion of non-static videos generated by the model.

\textbf{Imaging Quality.}
Imaging Quality is measured by the MUSIQ~\cite{ke2021musiq} image quality predictor trained on the SPAQ dataset, which is capable of handling variable-sized aspect ratios and resolutions. The frame-wise score is linearly normalized to $[0,1]$ by dividing by 100, and the final score is then calculated by averaging the frame-wise scores across the entire video sequence.

\section{Temporal Attack Robustness}
\label{app:temporal}
\setcounter{equation}{0}

We adopt four representative temporal attacks from MarkDiffusion~\cite{pan2025markdiffusion} to systematically stress-test the robustness of latent video watermarking under temporal-domain manipulations.

\subsection{Attack Definitions}
Let $\{f_t\}_{t=1}^{T}$ denote the original video sequence of $T$ frames.

\noindent\textbf{FrameAverage.}
A sliding-window smoothing attack that replaces each frame with the average of its $n$-neighbourhood ($n=3$):
\begin{equation}
    f'_t = \frac{1}{2n+1}\sum_{\tau = t-n}^{t+n} f_\tau,
\end{equation}
where boundary frames are handled via zero-padding. The output has the same length $T$ but with blurred pixel values that suppress high-frequency watermark signals.

\noindent\textbf{FrameSwap.}
A stochastic local permutation that, for each frame $f_t$ ($t \ge 2$), swaps it with its predecessor $f_{t-1}$ with probability $p=0.25$:
\begin{equation}
    (f'_{t-1},\, f'_t) =
    \begin{cases}
        (f_t,\, f_{t-1}), & \text{with prob.}\ p, \\
        (f_{t-1},\, f_t), & \text{otherwise}.
    \end{cases}
\end{equation}
The total frame count remains $T$, but the temporal ordering is locally disrupted.

\noindent\textbf{FrameRateAdapter.}
A linear-interpolation resampling that changes the frame rate from $r_s$ to $r_d$ ($30 \to 24$\,fps). For each target time index $t'$, the corresponding source position $\sigma = t' \cdot r_s / r_d$ is computed; the output frame is:
\begin{equation}
    f'_{t'} = (1-\alpha)\, f_{\lfloor\sigma\rfloor} + \alpha\, f_{\lceil\sigma\rceil}, \quad \alpha = \sigma - \lfloor\sigma\rfloor.
\end{equation}
The output length $T' = \lfloor T \cdot r_d / r_s \rfloor$ generally differs from $T$, causing a \textit{global temporal shift}.

\noindent\textbf{FrameInterpolation.}
Between every pair of adjacent frames, $k=1$ synthetic frame is inserted via linear blending:
\begin{equation}
    g_j = \left(1-\tfrac{j}{k+1}\right) f_t + \tfrac{j}{k+1}\, f_{t+1}, \quad j = 1,\dots,k.
\end{equation}
The output length becomes $T' = (k+1)(T-1)+1$, and the newly introduced frames carry no original latent structure.

\subsection{Experimental Setup}
We evaluate three configurations to isolate the effect of the watermarking method and the generation backbone:
\begin{enumerate}
    \item \textbf{VideoShield (ModelScope)}: The original VideoShield algorithm~\cite{hu2025videoshield} running on ModelScope~\cite{wang2023modelscope}.
    \item \textbf{VideoShield (LTX-2 T2AV)}: VideoShield ported to the same LTX-2 T2AV backbone used by mAVE.
    \item \textbf{mAVE (LTX-2 T2AV)}: Our full method on LTX-2 T2AV.
\end{enumerate}

\begin{table}[h]
    \centering
    \caption{\textbf{Temporal Attack Results (Video Bit Accuracy).} Four temporal attacks evaluated on three configurations. FrameAverage preserves alignment and thus maintains high accuracy; FrameSwap causes moderate local misalignment; FrameRateAdapter and FrameInterpolation cause catastrophic global temporal shifts.}
    \label{tab:temporal_attacks}
    \begin{tabular}{l|cccc}
        \toprule
        \textbf{Configuration} & \textbf{FrameAvg} & \textbf{FrameSwap} & \textbf{FrameRate} & \textbf{FrameInterp} \\
        \midrule
        VideoShield (ModelScope) & 0.995 & 0.732 & 0.572 & 0.580 \\
        VideoShield (LTX-2 T2AV) & 0.948 & 0.699 & 0.581 & 0.557 \\
        mAVE (LTX-2 T2AV) & 0.927 & 0.710 & 0.569 & 0.561 \\
        \bottomrule
    \end{tabular}
\end{table}

\subsection{Analysis}
The degradation pattern is strikingly consistent across all three configurations and can be fully explained by the degree to which each attack breaks \textit{absolute temporal alignment}.

Both mAVE and VideoShield embed watermarks into a pre-allocated three-dimensional grid, where each temporal slice is independently randomized by a pseudo-random cipher stream (e.g., the ChaCha20 keystream $\mathcal{K}$ in mAVE, via $M_{\text{rand}} = B^{\text{diff}} \oplus \mathcal{K}$). Crucially, the temporal repetition factor is $k_t = 1$, meaning every frame carries a \textit{unique}, independently encrypted watermark segment. The correctness of extraction therefore hinges on each frame remaining at its original temporal index.

\noindent\textbf{FrameAverage.}
The sliding-window average blurs pixel values but \textit{does not alter the absolute frame index}: frame $t$ remains at position $t$ after the attack. Because the spatial repetition factors $k_h = 4,\, k_w = 4$ provide strong redundancy, the per-coordinate numerical perturbation is easily corrected by majority voting and zero-thresholding. Hence, Bit Accuracy remains above 0.92.

\noindent\textbf{FrameSwap.}
Swapping adjacent frames with probability $p = 0.25$ causes \textit{local temporal misalignment}: for a swapped pair $(t, t{+}1)$, extraction at position $t$ retrieves the latent of frame $t{+}1$, producing XOR against the wrong keystream segment and yielding random bits (expected accuracy $\approx 0.5$). Meanwhile, the $\sim\!75\%$ of un-swapped frames remain correctly aligned. The overall Bit Accuracy is approximately $(1-p) \cdot \text{BA}_{\text{clean}} + p \cdot 0.5 \approx 0.75 \times 0.93 + 0.25 \times 0.5 \approx 0.82$, consistent with the observed range of 0.70--0.73 after accounting for cascading effects and the non-independence of consecutive swap decisions.

\noindent\textbf{FrameRateAdapter \& FrameInterpolation.}
Both attacks induce a \textit{global temporal shift}: resampling or inserting frames causes every subsequent frame to be displaced from its original index. Once the absolute temporal alignment is lost, the extraction process degenerates into ``opening every lock with the wrong key''---each frame is XORed with an unrelated keystream segment. The resulting bit sequence exhibits near-complete randomness, and the Bit Accuracy collapses to the theoretical floor $\sim\!0.5$. The residual margin above 0.5 is attributable to partially aligned frames near the beginning of the sequence and boundary effects of the attack.

\noindent\textbf{Cross-configuration consistency.}
The near-identical degradation pattern across VideoShield (ModelScope), VideoShield (LTX-2), and mAVE (LTX-2) confirms that this vulnerability is \textit{not} specific to mAVE but is an inherent structural limitation of any latent watermarking scheme that relies on absolute temporal indexing with $k_t = 1$. Possible mitigations include temporal synchronization markers, adaptive repetition factors ($k_t > 1$), or frame-registration-based re-alignment during extraction, which we leave for future work.

\subsection{Discussion: Why Increasing $k_t$ Alone Is Insufficient}

A natural mitigation is to raise the temporal repetition factor $k_t$ so that every $k_t$ consecutive frames form a \textit{temporal block} sharing the same watermark bits and the same ChaCha20 keystream segment. We analyse the benefits and fundamental limits of this strategy.

\noindent\textbf{What $k_t > 1$ can solve.}
Setting $k_t = 3$, for instance, means frames $\{0,1,2\}$ constitute a single temporal block and are encrypted with an identical keystream segment. This provides two immediate gains:
\begin{itemize}
    \item \textit{Perfect defence against FrameSwap.} If frames 1 and 2 are swapped, both still belong to the same $k_t$-block; the extraction key remains correct, and Bit Accuracy is unaffected.
    \item \textit{Resilience to minor jitter and sporadic frame drops.} The $k_t$-fold temporal redundancy allows majority voting along the time axis to ``pull back'' the correct bits, analogous to the spatial redundancy provided by $k_h$ and $k_w$.
\end{itemize}

\noindent\textbf{Why $k_t > 1$ cannot solve FrameRateAdapter / FrameInterpolation.}
Both attacks introduce a \textit{sustained, cumulative} temporal offset rather than a local perturbation. Consider $k_t = 3$ under a $30 \to 15$\,fps down-sampling (every other frame is dropped):
\begin{enumerate}
    \item The extractor still partitions the received frames as $\{0,1,2\},\{3,4,5\},\dots$ and generates the keystream accordingly.
    \item The \textit{physical} frames received at positions $0,1,2,3,\dots$ are actually original frames $0,2,4,6,\dots$.
    \item For the first block, received frame 1 (original frame 2) is still within block 0 and decrypts correctly. However, received frame 3 (original frame 6) should belong to block 2 ($\{6,7,8\}$), yet the extractor applies the keystream of block 1 ($\{3,4,5\}$).
\end{enumerate}
Once the accumulated offset exceeds $k_t$, \textit{all} subsequent blocks are permanently misaligned. The decryption reverts to random XOR and Bit Accuracy collapses back to ${\sim}\,0.5$. Raising $k_t$ merely delays the onset of misalignment by a few blocks but cannot prevent the eventual avalanche, while simultaneously reducing the effective payload capacity by a factor of $k_t$.

\begin{table}[h]
    \centering
    \caption{\textbf{Effect of Temporal Repetition Factor $k_t$ (mAVE, LTX-2 T2AV).} Video Bit Accuracy under the four temporal attacks with $k_t \in \{1, 3\}$. To keep a fixed payload length of 512, the channel repetition factor is dynamically scaled as $k_c' = k_{c\_\text{ori}} / k_t$.}
    \label{tab:kt_ablation}
    \begin{tabular}{c|cccc}
        \toprule
        $k_t$ & \textbf{FrameAvg} & \textbf{FrameSwap} & \textbf{FrameRate} & \textbf{FrameInterp} \\
        \midrule
        1 & 0.927 & 0.710 & 0.569 & 0.561 \\
        3 & 0.933 & 0.722 & 0.552 & 0.558 \\
        \bottomrule
    \end{tabular}
\end{table}

\noindent\textbf{Towards a principled solution: dynamic temporal re-alignment.}
Rather than blindly scaling $k_t$, a more promising direction is to explore active synchronization mechanisms during extraction. By incorporating robust temporal reference signals or synchronization metadata within the video latent space, a detection pipeline could theoretically track and compensate for frame-level manipulations such as drops, duplications, or insertions. This paradigm aims to reconstruct the original temporal sequence prior to decryption, effectively decoupling temporal robustness from brute-force redundancy. Consequently, it could preserve payload capacity while restoring immunity to global temporal shifts. We leave the exploration of such active re-alignment strategies to future work.
\section{Evaluation Prompt List}
\label{app:prompts}
\setcounter{equation}{0}

We curate 250 text prompts from VBench~\cite{huang2024vbench}, evenly distributed across 8 semantic categories. Below we list representative examples from each category. The full list of 250 prompts is available in the official code repository.

\begin{table}[h]
    \centering
    \caption{\textbf{Representative Evaluation Prompts by Category.} Selected examples from each of the 8 VBench categories.}
    \label{tab:prompt_examples}
    \resizebox{\textwidth}{!}{%
    \begin{tabular}{l|p{10cm}}
        \toprule
        \textbf{Category} & \textbf{Example Prompts} \\
        \midrule
        Animal & ``A panda standing on a surfboard in the ocean in sunset''; ``A jellyfish floating through the ocean, with bioluminescent tentacles'' \\
        Architecture & `A 3D model of a 1800s victorian house''; ``Asian garden and medieval castle'' \\
        Food & ``Slow motion cropped closeup of roasted coffee beans falling into an empty bowl''; ``A serving of pumpkin dish in a plate'' \\
        Human & ``Vincent van Gogh is painting in the room''; ``An astronaut feeding ducks on a sunny afternoon, reflection from the water'' \\
        Lifestyle & ``This is how I do makeup in the morning''; ``Home deco with lighted'' \\
        Plant & ``Yellow flowers swing in the wind''; ``An elephant spraying itself with water using its trunk to cool down'' \\
        Scenery & ``Underwater coral reef''; ``Train station platform'' \\
        Vehicles & ``A car stuck in traffic during rush hour''; ``A motorcycle cruising along a coastal highway'' \\
        \bottomrule
    \end{tabular}%
    }
\end{table}

Each prompt is designed to involve both rich visual dynamics and plausible ambient or event-driven audio, ensuring meaningful evaluation of the audio-visual binding capability.

\end{document}